\documentclass[11pt,a4paper, amsmath,amssymb,nofootinbib,preprintnumbers, unsortedaddress]{article}
\pdfoutput=1
\usepackage{jheppub}
\usepackage{dsfont}
\usepackage{graphicx}
\usepackage{amssymb}
\usepackage{amsmath}
\usepackage{amsthm}
\usepackage{bbm}
\usepackage{mathrsfs}
\usepackage{slashed}
\usepackage{caption}
\usepackage{tikz,pgfplots,pgfmath,tikz-3dplot}
\usetikzlibrary{calc,shapes.geometric}
\tdplotsetmaincoords{80}{45}
\tdplotsetrotatedcoords{-90}{180}{-90}
\captionsetup{font=small, labelfont=bf}
% \captionsetup[sub]{labelsep=period, subrefformat=brace}
\usepackage[dvipsnames]{xcolor}
\usepackage[framemethod=tikz]{mdframed}
\usepackage{comment}
\usepackage{esint}

\makeatletter
\renewcommand*\env@matrix[1][\arraystretch]{%
  \edef\arraystretch{#1}%
  \hskip -\arraycolsep
  \let\@ifnextchar\new@ifnextchar
  \array{*\c@MaxMatrixCols c}}
\makeatother

\newcommand\eq[1]{eq.~(\ref{eq:#1})}

\newcommand{\sn}[1]{\S~\ref{sec:#1}}
\newcommand{\fig}[1]{Fig.~\ref{fig:#1}}
\newcommand{\app}[1]{App.~\ref{app:#1}}

\newcommand\mf[1]{{\mathfrak{#1}}}
\newcommand\mc[1]{{\mathcal{#1}}}

\newcommand\CN{{\mc{N}}}
\newcommand\CM{{\mc{M}}}
\newcommand\CO{{\mc{O}}}

\def\shrug{\texttt{\raisebox{0.75em}{\char`\_}\char`\\\char`\_\kern-0.5ex(\kern-0.25ex\raisebox{0.25ex}{\rotatebox{45}{\raisebox{-.75ex}"\kern-1.5ex\rotatebox{-90})}}\kern-0.5ex)\kern-0.5ex\char`\_/\raisebox{0.75em}{\char`\_}}}

\definecolor{cardinal}{rgb}{0.6,0,0}
\definecolor{darkgreen}{rgb}{0,0.5,0}
\definecolor{golden}{rgb}{0.92, 0.7, 0}
\definecolor{midnight}{rgb}{0, 0, 0.5}
\definecolor{darkblue}{rgb}{0.2, 0, 0.8}

\newtheorem{theorem}{Theorem}[section]
\newtheorem{lemma}[theorem]{Lemma}

\begin{document}

\title{Conformal Defects in Neural Network Field Theories}

\author[a]{Pietro Capuozzo,}
\author[b]{Brandon Robinson}
\author[a,c]{and Benjamin Suzzoni}

\affiliation[a]{STAG Research Centre, University of Southampton,\\ Southampton, SO17 1BJ, UK}
\affiliation[b]{Institute of Physics, University of Amsterdam,\\ 904 Science Park, 1098 XH Amsterdam, Netherlands}
\affiliation[c]{Department of Mathematical Sciences, Ulsan National Institute of Science and Technology,\\ 50 UNIST-gil, Ulsan 44919, South Korea}

\emailAdd{P.Capuozzo@soton.ac.uk}
\emailAdd{b.j.robinson@uva.nl}
\emailAdd{b.suzzoni@benterre.com}

\abstract{Neural Network Field Theories (NN-FTs) represent a novel construction of arbitrary field theories, including those of conformal fields, through the specification of the network architecture and prior distribution for the network parameters. In this work, we present a formalism for the construction of conformally invariant defects in these NN-FTs.  We demonstrate this new formalism in two toy models of NN scalar field theories. We develop an NN interpretation of an expansion akin to the defect OPE in two-point correlation functions in these models.}

\arxivnumber{2512.07946}

\maketitle
\tableofcontents

%%%%%%%%%%%%
%%%%%%%%%%%%
\section{Introduction}\label{sec:introduction}

Conformal Field Theories (CFTs) occupy a distinguished place within the space of Quantum Field Theories (QFTs). They are characterized by an enhanced set of (global) spacetime symmetries, furnished by the extension of the Poincar\'e group to include scaling and special conformal transformations. They are concretely defined in that they lack an intrinsic energy scale, and thus require no such information in their specification. CFTs are often found at fixed points of Renormalization Group (RG) flows, and are therefore important elements in the study of the behavior of QFTs with relevant couplings. CFTs also exist in arbitrary dimensions, and --- in certain, often supersymmetric, cases --- can be related across dimensions via compactifications and/or RG flows. Furthermore, unitarity is not a strictly necessary condition for defining a CFT; non-unitary CFTs find application in modeling numerous physical systems and have rich spaces of conformal boundary/defect conditions, see e.g. \cite{Chalabi:2022qit,Herzog:2024zxm} for a recent discussion. Because of their ubiquity, CFTs play a crucial role in the study of many areas of physics, from critical phenomena in condensed matter systems to string/M-theory and mathematics; hence, new methods and insights in the understanding of CFTs can often impact the study of disparate fields in surprising ways.

A novel method for defining a (C)FT comes from the study of Neural Networks (NNs) --- so-called NN-FTs \cite{Halverson:2020trp, Halverson:2021aot} --- which offers an interesting bridge between a mature domain of theoretical physics and the rapidly developing field of Machine Learning (ML), and opens up a number of new avenues for cross-pollination between the two. Starting from the observation that the Gaussian Processes (GPs) observed in certain asymptotic limits of NNs, e.g. that of infinite width, can be brought into correspondence with free field theories, one can then understand any non-Gaussianities as arising from interactions in the corresponding QFT \cite{Demirtas:2023fir}. Moreover, one can encode global and gauge symmetries into appropriate equivariant structures within the NN \cite{Maiti:2021fpy}, and  the interpretation of Poincar\'e symmetries as symmetries of the NN's input space has recently been extended to include global conformal symmetries \cite{Halverson:2024axc}. The ability of fairly simple network architectures to realize the $d$-dimensional Euclidean conformal symmetry group $SO(d+1,1)$ (or, in Lorentzian signature, $SO(d,2)$) and be constrained by its kinematics is quite remarkable. Given the prevalence and importance of CFTs in fields such as string/M-theory, it is crucial to build out the NN-FT program and explore the limits of the paradigm's ability to construct new classes of CFTs. 

Boundaries, impurities, and defects of arbitrary co-dimension have long played a central role in the study of CFTs. A geometric picture of defects begins with a reference `ambient' space on which a CFT is defined, and into which a lower-dimensional `defect' submanifold is embedded. This submanifold supports degrees of freedom that are restricted to it, and which are typically coupled to degrees of freedom in the ambient CFT. These extended objects represent a crucial facet not only of the abstract definition of a QFT, but also of concrete models of physical systems, e.g. the Kondo effect and quantum Hall physics, see a recent review in \cite{Andrei:2018die}. Thus, given this novel method of defining conformal fields via NNs, a natural question to ask is how such a construction can accommodate the full spectrum of a CFT, including, in particular, any allowed non-local deformations it may support. The definition of conformal fields through NNs relies on concepts in CFTs such as the embedding space formalism, and the incorporation of conformal symmetry-breaking deformations into this framework is well understood \cite{Billo:2016cpy, Lauria:2017wav}. With that in mind, the purpose of the following work is to unite these approaches and develop a method for constructing novel conformal defects in the context of NN-FTs. 
  
While NNs offer a powerful framework for constructing new classes of field theories, including those with conformal defects, the concept of defects is itself of interest from an ML perspective. Indeed, drawing on lessons from field theory, defects offer a new way to introduce non-Gaussianities by symmetry breaking alone.  As a simple example, consider a GP with an arbitrary $d$-dimensional input space --- in the QFT sense, a free field theory in $d$ dimensions. Defining the theory instead on a background with a boundary, we must specify the behavior of fields near the locus of a codimension-one spacetime submanifold, which in turn determines the couplings to any boundary degrees of freedom. This interaction between bulk and boundary fields naturally gives rise to non-Gaussianities. 
  
This example of a free field theory in the presence of a boundary can be understood in the NN context through a reduced symmetry in the NN parameters, collectively referred to by $\Theta$ throughout the body of the text, whose distribution is $P(\Theta)$. Consider a conformally invariant $d$-dimensional measure $\mathrm{d}P(\Theta)$ in parameter space. Breaking this symmetry down to a $(d-1)$-dimensional conformal subgroup can be achieved, for instance, by breaking the translational symmetry along one direction in $\Theta$-space, e.g. by freezing the value of one of the parameters, $\Theta_\perp = c$. Indeed, this breaks the symmetry in the $\Theta_\perp$ direction while preserving it along the remaining $\Theta\setminus\{\Theta_\perp\}$.  This logic can be extended to the breaking of other kinds of symmetries, e.g. rotations, and to higher co-dimension defects, in a straightforward fashion.
  
There is an alternative and particularly intriguing way to think of this symmetry breaking effect. In the NN-FT picture, the input space for the NN is regarded as the background space(-time) on which the field theory is defined; here, for simplicity, we will take this to be $\mathbb{R}^d$. The introduction of a co-dimension $q$ defect from the input space perspective has conceptual overlap with the so-called `manifold hypothesis', which states that data is typically distributed on a submanifold immersed\footnote{Note that it is not guaranteed that the intrinsic topology of the `data submanifold' coincides with the (pullback) subset topology in the ambient space. Hence, it is not always the case that there is a smooth embedding.} in a higher-dimensional input space; see \cite{fefferman2016testing} for an introduction to the topic. Thus, one could ask how out-of-distribution or off-of-defect (OOD) samples, i.e. those lying away from the support of the true data distribution, can be used to probe the data submanifold itself. That is, how do expectations of OOD samples with respect to the joint distribution formed by the parameters spanning the directions along the submanifold and the directions normal to it characterize the submanifold structure? Addressing these questions requires methods familiar from the study of correlation functions in defect CFTs.   

One of the defect CFT techniques that we will employ to gain deeper insight into the structure of correlation functions for these novel conformal defects from NNs is the decomposition of `ambient fields' in the spectrum of `defect fields', which draws from the defect operator product expansion (OPE). The lack of an NN-FT operator formalism notwithstanding, the defect OPE is a useful analogy, as it allows the correlation functions of fields `sourced' in the ambient space to be expressed in terms of defect degrees of freedom transforming in specific representations of the lower-dimensional conformal symmetry group, which depending on the embedding may or may not include rotational symmetries in the directions normal to the submanifold. By organizing correlation functions in terms of their quantum numbers under the defect symmetries, we will be able to reconstruct correlation functions of ambient fields in higher-dimensional input spaces from linear combinations of expectations taken with respect to smaller (i.e. defect) networks with specific weights. 
 
In the present work, we propose a methodology to construct conformal symmetry breaking layers in NN-FTs and thereby incorporate conformal defects into the general paradigm.  We briefly review the necessary components of building conformal fields from NNs and how defects fit into the embedding space formalism in \sn{NN-CFT}. In \sn{defect-NNs}, we synthesize these concepts into a framework for building conformal defects of arbitrary codimension from NNs. With this formalism in hand, we explore two classes of toy models to elucidate its utility in \sn{examples}. We then recapitulate our findings and discuss future directions in \sn{conclusions}. In \app{two-point-fns}, we collect intermediate results that are useful for computing two-point functions in the toy models that we consider.

\section{Background}\label{sec:NN-CFT}
In this section, we  briefly review the concepts necessary to construct conformal fields in the NN-FT paradigm and their extension to conformal defects, as well as set the notation used in the later sections.  We begin by covering the basic notions of the embedding space formalism and its utility in building conformal fields from neural networks \cite{Halverson:2020trp}.  We then summarize how conformal defects are encoded in the embedding space framework \cite{Billo:2016cpy}.

\subsection{NN-FTs for conformal fields}\label{sec:embedding-space}

To begin, let us recall the basic logic of the NN-FT formalism \cite{Halverson:2020trp}.  Central to both the ML and FT descriptions are random functions, i.e. function-valued random variables. On the ML side, a choice of network architecture defines a family of functions $\{\Phi:\mathcal{X}\to\mathbb{R}\mid\Theta\in\mathcal{P}\}$ over the input space $\mathcal{X}$, where the parameters $\Theta$ undergo a random initialization $\Theta\sim P$, for a probability measure $P$ over the parameter space $\mathcal{P}$. Hence, under random initialization, $\Phi(X)$ defines a random function. We will henceforth use the simpler notation $\Phi(X)$ and leave the $\Theta$-dependence implicit. In a Euclidean (Q)FT, fields $\Phi(X)$ are modeled as random distributions from a probability space to a configuration space, which is typically the dual of Schwartz space over the Euclidean spacetime. In such theories, the action functional $S[\Phi]$ constructs a Gibbs-type probability measure $\mathcal{D}\Phi~e^{-S[\Phi]}$, where $\mathcal{D}\Phi$ denotes the path integral measure, a heuristic analogue of the Lebesgue measure on the infinite-dimensional space of field configurations. Given this probabilistic model for the field, and given an external source $J(X)$ --- a Schwartz test function --- one defines the generating functional $Z[J]$ for correlation functions as a path integral over field configurations,
\begin{align}
    Z[J] = \int \mathcal{D}\Phi~ e^{-S[\Phi]+\int \mathrm{d}^dX~J(X)\Phi(X)}~.
\end{align}
This notion also extends to Lorentzian QFTs satisfying the Wightman axioms \cite{wightman}, via analytic continuation in time from Osterwalder-Schrader Euclidean theories \cite{Osterwalder:1973dx}. A correspondence can be established between the field $\Phi(X)$ of a Euclidean field theory at a point $X$ and the output of a neural network given an input $X$, which, as above, we also denote $\Phi(X)$; the path integral $Z[J]$ can then be interpreted as an expectation value over network realizations,
\begin{align}
    Z[J] &= \int \mathrm{d}\Theta~P(\Theta)~ e^{\int \mathrm{d}^dX~J(X)\Phi(X)}\nonumber\\
    &=\mathbb{E}_{\Theta\sim P}\left[e^{\int \mathrm{d}^dX~J(X)\Phi(X)}\right]~,
\end{align}
which serves as a moment generating functional (MGF) of the random function $\Phi$. In this formulation, the QFT path integral and the neural network ensemble average play analogous roles. In particular, the action functional $S[\Phi]$ which weighs field configurations in the QFT finds its formal analogue, on the NN side, in the initialization law $P$ that weighs the network parameters.

Taking the definition of a field within the NN-FT paradigm to be the specification of the network architecture --- e.g. in the perceptron $\Phi(\vec{X}) = h( \vec{w}\cdot \vec{X} + \vec{b})$, the specification of weights $w_i$, biases $b_i$, and non-linearities $h(~\cdot~)$ --- then it is clear that the correlation functions of the FT can be computed from the moments of $P(\Theta)$ as follows,
\begin{align}
    \left<\Phi(X_1)\ldots\Phi(X_n)\right>
    &=\int\mathrm{d}\Theta~ P(\Theta)~\Phi(X_1)\cdots \Phi(X_n)\nonumber\\
    &= \mathbb{E}_{\Theta\sim P}\left[\Phi(X_1)\cdots\Phi(X_n)\right]~.
\end{align}

If one wishes to consider this new approach to defining an FT through its description in the parameter space $\mathcal{P}$, then it is natural to inquire about the ways in which symmetries are encoded in an NN. As an example, let us consider the perceptron model as a scalar field under the action of translations. The field theoretic translation $X\to X+a$ of the background space(time) coordinates, for $a\in\mathbb{R}$, is understood in the NN picture as a translation of the network's inputs; therefore, demanding translational invariance requires that the biases transform as $\vec{b}\to \vec{b}^\prime= \vec{b}+ \vec{w}\cdot\vec{a}$, so that $\mathrm{d}\Theta~P(\Theta)\to \mathrm{d}\Theta^\prime~P(\Theta^\prime)$. Hence, in the NN parameter space description, symmetry transformations of the inputs $X\to X^\prime$ are realized through the capacity of the NN parameters to compensate transformations $\Theta\to\Theta^\prime$ in such a way as to leave the expectations $\mathbb{E}_{\Theta\sim P(\Theta)}[\Phi(X_1)\ldots]$ invariant. The logic can be extended from the simple example of translations to include all of the Poincar\'e symmetries \cite{Halverson:2020trp,Halverson:2021aot, Maiti:2021fpy} and conformal symmetries \cite{Halverson:2024axc}.  

As a brief aside, we note that global symmetries act differently in the context of NN-FTs since they are transformations in field (or network) space rather than in the input or parameter spaces. For instance, if we consider the analog of an $O(N)$ model of NN-FTs, we would have a collection of architectures $\Phi_a(X)$ that are rotated into one another, $\Phi_b(X) = R_b{}^a\Phi_a(X)$ by $R\in O(N)$. The concept of global symmetries occurs naturally in the context of defects in NN-FTs, but its appearance is incidental to breakings of the symmetries of the input space which preserve a global subgroup of normal bundle symmetries; we will momentarily forgo further discussion on this topic, only to return to it in \sn{conclusions}. 

Let us now return to the realization of conformal fields in NN-FTs. An obstruction to simply extending the logic of encoding Poincar\'e symmetries to global conformal symmetries is that, for a given CFT defined on a flat $d$-dimensional manifold, the Euclidean (Lorentzian, resp.) global conformal symmetry group $SO(d+1,1)$ ($SO(d,2)$, resp.) acts {\emph{non-linearly}} on fields. Fortunately, it is possible to bypass this issue by extending the $d$-dimensional (Euclidean) background to $d+2$ dimensions by adding one spacelike and one timelike direction. In the resulting {\emph{embedding space}}, the $d$-dimensional conformal symmetries act linearly, as they are induced by the Lorentz symmetries of the $(d+2)$-dimensional background \cite{Dirac:1936fq}. Within this higher-dimensional space, one identifies its null cone and quotients it by non-zero rescalings. The $d$-dimensional CFT is defined on the resulting space, which is referred to as the projective null cone ($\mathbb{P}$NC). The Poincar\'e section (PS) selects a unique representative from each projective equivalence class, i.e., it embeds $\mathbb{R}^d\hookrightarrow\mathbb{P}$NC. As reviewed below, the embedding space formalism offers a straightforward method to compute correlation functions in CFTs.

Let us make this discussion somewhat more concrete. Consider a CFT defined on a flat $d$-dimensional background $\CM = \mathbb{R}^d$ with coordinates $\{x^\mu\}_{\mu=1}^d$ and endowed with the Euclidean metric producing the line element $\mathrm{d}s^2 = \delta_{\mu\nu}\mathrm{d}x^\mu\mathrm{d}x^\nu$. Its embedding space $\mathbb{R}^{d+1,1}$ has coordinates $\{X^M\}_{M=1}^{d+2}$ and is endowed with the Minkowski metric with line element $\mathrm{d}s^2= \eta_{MN}\mathrm{d}X^M\mathrm{d}X^N$, which we take to have mostly plus signature $(++\ldots +-+)$. We will mostly employ lightcone coordinates $X = (X^+,X^-,  x^\mu)$, where $X^\pm =X^{d+2}\pm X^{d+1} $. The $\mathbb{P}$NC is then the locus defined by the projective condition $X\sim \lambda X$, for $\lambda \in \mathbb{R}\setminus\{0\}$, and  by the null condition $X^2\equiv X\cdot X=0$, where $\cdot$ denotes the invariant scalar product with respect to the metric $\eta_{MN}$. The physical theory is recovered on the PS defined by $X\mapsto (1,x^2, x^\mu)$. This construction is visualized in \fig{nullcone}.

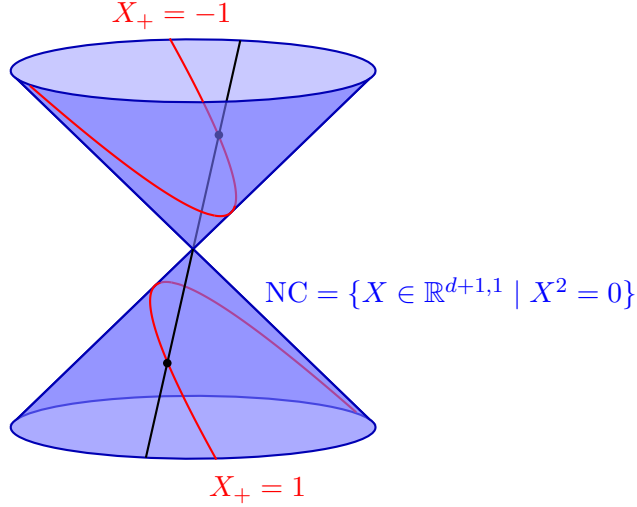
\begin{figure}[htbp]
  \centering
    \begin{tikzpicture}[tdplot_main_coords, scale=0.8]
    \coordinate (O) at (0,0,0);
    \draw[canvas is xy plane at z=-3, draw=blue!70!black, thick, fill=blue!35!white, fill opacity=.6] (60:3) arc (60:210:3) -- (O) -- cycle;
    \draw[canvas is xy plane at z=3, draw=blue!70!black, thick, fill=blue!35!white, fill opacity=.6] (30:3) arc (30:240:3) -- (O) -- cycle;
    \draw[red, thick]
    plot[domain=0.7:3, samples=200, variable=\t]
    ({sqrt(2*\t - 1)}, {1 - \t}, {\t});
    \draw[red, thick]
    plot[domain=-0.5:-2.7, samples=200, variable=\t]
    ({sqrt(-2*\t - 1)}, {-1 - \t}, {\t});
    \draw[red, thick]
    plot[domain=-0.5:-0.7, samples=200, variable=\t]
    ({-sqrt(-2*\t - 1)}, {-1 - \t}, {\t});
    \draw[black, thick] plot[domain=-0.7:0, samples=5, variable=\t]
    ({1.5*\t}, {-2.598*\t}, {-3*\t});
    \draw[black, thick] plot[domain=-1:-0.7, samples=5, variable=\t]
    ({1.5*\t}, {-2.598*\t}, {-3*\t});
    \fill[fill=black] (-7.4,8) circle (2pt);
    \draw[canvas is xy plane at z=-3, draw=blue!70!black, thick, fill=blue!55!white, fill opacity=.6] (60:3) arc (60:-150:3) -- (O) -- cycle;
    \draw[canvas is xy plane at z=3, draw=blue!70!black, thick,  fill=blue!55!white, fill opacity=.6] (30:3) arc (30:-120:3) -- (O) -- cycle;
    \draw[red, thick]
    plot[domain=3:4.3, samples=200, variable=\t]
    ({sqrt(2*\t - 1)}, {1 - \t}, {\t});
    \draw[red, thick]
    plot[domain=0.5:0.8, samples=200, variable=\t]
    ({sqrt(2*\t - 1)}, {1 - \t}, {\t});
    \draw[red, thick]
    plot[domain=0.5:2.7, samples=200, variable=\t]
    ({-sqrt(2*\t - 1)}, {1 - \t}, {\t});
    \draw[red, thick]
    plot[domain=-0.55:-4.3, samples=200, variable=\t]
    ({-sqrt(-2*\t - 1)}, {-1 - \t}, {\t});
    \draw[black, thick] plot[domain=0:1, samples=5, variable=\t]
    ({1.5*\t}, {-2.598*\t}, {-3*\t});
    \fill[fill=black] (7.4,-8) circle (2pt);
    \node[blue] at (6,0) {$\mathrm{NC}=\{X\in\mathbb{R}^{d+1,1}\mid X^2=0\}$};
    \node[red] at (17,-15.5) {$X_+=1$};
    \node[red] at (-16,15.5) {$X_+=-1$};
    \node[blue] at (-8,0) {};
  \end{tikzpicture}
  \caption{A visual representation of the null cone $\text{NC}\subset\mathbb{R}^{d+1,1}$. The Poincar\'e section (PS) $X_+=1$ and its antipodal locus $X_+=-1$ are shown in red; they select a unique representative from each projective orbit $X\sim\lambda X$ within each section. An example of such a null ray is shown in black.}
  \label{fig:nullcone}
\end{figure}

As mentioned above, the generators of the conformal symmetries on the PS are induced by the Lorentz symmetry generators of the embedding space, which are of the form $J_{MN} = X_M \partial_N -X_N\partial_M$. Pulling back to the PS, spatial rotations $L_{ij}$ correspond to the transformations which stabilize points on the $\mathbb{P}$NC, i.e. the subgroup of the full stabilizer group preserving the projective and null conditions, which is generated by $J_{ij}$. Translations $P_\mu$ correspond to boosts along $X_+$, which are generated by $J_{\mu+}$, while the special conformal transformations $K_\mu$ are induced by the generator $J_{\mu-}$. Lastly, dilatations $D$, i.e. scale transformations, originate from $J_{+-}$. 

The extended background space(time) (i.e. input space) symmetries impose strong constraints on observables such as the connected correlation functions of conformal fields, or rather expectations of the form $\mathbb{E}[\Phi(X_1)\ldots \Phi(X_n)]$.  In particular, in a $d$-dimensional CFT, conformal symmetry is constraining enough that the two-point function of scalar fields $\Phi_a$ and $\Phi_b$, with respective scaling dimension (i.e. eigenvalue under dilatations) $\Delta_a$ and $\Delta_b$, is uniquely determined\footnote{Here we only write the contribution from spinless, primary operators within the OPE; where by `primary' we are referring to the state created by acting with the operator $\CO$ on the conformal vacuum that is annihilated by generators of special conformal transformations, i.e. $K_\mu|\CO\rangle=0$. In a general CFT, there will be contributions from spin-$\ell$ primaries alongside the entire tower of descendants. For the purposes of this lightning review, we will make similar truncations for the three- and four-point functions. See \cite{Dolan:2000ut} for a thorough discussion.} to be
\begin{align}
    \left<\Phi_a(x_1)\Phi_b(x_2)\right> = \frac{\delta_{ab}}{|x_1 - x_2|^{2\Delta_a}}\propto \frac{\delta_{ab}}{(X_1\cdot X_2)^{\Delta_a}}~,
\end{align}
where the last relation is an equality only up to numerical factors. Similarly, the three-point function of scalar fields in a CFT is also highly constrained, but it nevertheless contains important physical information. For three scalar fields $\Phi_a,\Phi_b,\Phi_c$, conformal symmetry fixes the following form,  
\begin{align}
    \left<\Phi_a(x_1)\Phi_b(x_2)\Phi_c(x_3)\right> = \frac{C_{abc}}{|x_1-x_2|^{\Delta_{a}+\Delta_b-\Delta_c} |x_2-x_3|^{\Delta_b+\Delta_c-\Delta_a} |x_1-x_3|^{\Delta_a+\Delta_c-\Delta_b}}~.
\end{align}
The constant $C_{abc}$ appearing in the three-point function is the operator product expansion (OPE) coefficient. Higher-point functions are less constrained, in that they can be determined by kinematics only up to functions of the conformal cross ratios
\begin{align}
\begin{split}
    u &= \frac{(X_1\cdot X_2)(X_3\cdot X_4)}{(X_1\cdot X_3)(X_2\cdot X_4)} \overset{\rm{P.S.}}{=} \frac{|x_1-x_2|^2|x_3-x_4|^2}{|x_1-x_3|^2|x_2-x_4|^2},\\
    v &= \frac{(X_1\cdot X_4)(X_2\cdot X_3)}{(X_1\cdot X_3)(X_2\cdot X_4)} \overset{\rm{P.S.}}{=} \frac{|x_1-x_3|^2|x_2-x_3|^2}{|x_1-x_3|^2|x_2-x_4|^2}.
    \end{split}
\end{align}
where $\overset{\rm{P.S.}}{=}$ indicates restriction to the Poincar\'e section.  For example, the four-point function of scalars $\Phi_a,\ldots,\Phi_d$ in the scalar channel is \cite{Dolan:2000ut}
\begin{align}
    \left<\Phi_a(x_1)\Phi_b(x_2) \Phi_c(x_3)\Phi_d(x_4)\right> =\frac{\left(\frac{|x_2-x_4|}{|x_1-x_4|}
    \right)^{\Delta_a-\Delta_b}\left(\frac{|x_1-x_4|}{|x_1-x_3|}\right)^{\Delta_c-\Delta_d}}{|x_1-x_2|^{\Delta_a+\Delta_b}|x_3-x_4|^{\Delta_c+\Delta_d}}g(u,v)~.
\end{align}
Typically, in CFTs, one must appeal to the OPE in order to find expressions for $g(u,v)$ that can be constrained by the `crossing symmetry' reflecting the equivalent ways of applying the OPE pairwise on the fields.  The reduced functions that appear in the OPE expansion and that are related to the function $g(u,v)$ appearing above are called \emph{conformal blocks}. Remarkably, in NN-FTs there are simple toy models in which the function $g(u,v)$, and hence the conformal blocks, can be directly computed \cite{Halverson:2024axc}.

Let us pause here for a moment to consider in greater detail the appearance of $C_{abc}$ and the relation of the four-point function $g(u,v)$ to the OPE through its expansion in conformal blocks; let us think about the implications for importing the OPE language into the context of NN-FTs.  In ordinary QFTs and CFTs, the OPE is a well-defined concept and plays a vital role in the study of the spectra of such theories.  As a quick reminder, for two CFT operators $\CO_a$ of scaling dimension $\Delta_a$, one inserted at the origin and the other at a distance $x_1$ away, and provided there are no other operator insertions within a ball of radius $x_1$, there is an equivalent expression in terms of an expansion in primary operators $\CO$ of scaling dimension $\Delta$ and the tower of their descendants
\begin{align}\label{eq:cft-ope}
    \CO_1(x_1)\CO_2(0)\sim \sum_{\CO}C_{12\CO}(x_1,\partial_x) \CO(x)\Big|_{x\to0}~.
\end{align}
The precise form of the coefficient $C_{abc}$, namely its dependence on the insertion point $x_1$ and the raising operators $P_\mu$, is fixed by conformal symmetry.  As an operator statement, the OPE can be employed in correlation functions, which allows one to leverage analytic expressions for lower-rank correlation functions as constraints; whence the conformal bootstrap program.  In the context of this brief review, it is then clear how the form of the three-point function quoted above arises and the role of $C_{abc}$.  

The consequence of using this OPE language in the context of correlators of conformal fields will be explored in greater detail below, but the expression in \eq{cft-ope} opens an interesting line of thought. The correlation of the outputs of two network realizations acting on similar regions of input space --- which in the NN-FT paradigm is understood as two field insertions at nearby points in the background space(time) --- is equivalent, in this OPE-like expansion, to a (weighted) sum over the tower of expectations of the outputs of independent networks.

\subsection{Conformal defects in embedding space}\label{sec:review-defect-embedding-space}

The insertion of a (flat) $p$-dimensional conformal defect in the embedding space formalism amounts to the (flat space) symmetry breaking $SO(d+1,1)\to SO(p+1,1)\times SO(q)_N$ and the splitting $T\CM\to T\mf{M}_p \oplus N\mf{M}_p$ of the tangent bundle $T\CM$ of the ambient space into tangent and normal bundles to the defect's worldvolume $\mf{M}_p$. Lifting to the projective null cone $\mathbb{P}\mathrm{NC}\subset\mathbb{R}^{d+1,1}$, the embedding space coordinates $\{X^M\}$ now split into two sets:
\begin{align}
    \{X^M\} = \{\widehat{X}^A\}\sqcup\{\widetilde{X}^I\}~,
\end{align}
where the coordinates $\widehat{X}^A$ are tangential to the uplifted defect worldvolume, while the coordinates $\{\widetilde{X}^I\}$ are normal to it. Furthermore, the breaking of the full global conformal symmetry group implies that the quadratic form $X\cdot Y$ is no longer a scalar under the reduced symmetry group. Instead, due to the orthogonality of the decomposition, it splits into the sum of an $SO(p+1,1)$-invariant product, denoted by $\bullet$, and an $SO(q)_N$-invariant one, denoted by $\circ$, as follows:
\begin{align}\label{eq:dot-decomposition}
    X\cdot Y = X\bullet Y + X\circ Y~.
\end{align}

Given the breaking of the global conformal symmetry group to its defect subgroup and the accompanying decomposition of the scalar product into tangential and normal components, one can expect novel features to arise in defect CFTs compared to stand-alone CFTs. Firstly, one can now distinguish between degrees of freedom that live in the higher-dimensional ambient space $\CM$ and those constrained to the defect submanifold $\mf{M}_p$, which also gives rise to new ambient-to-defect couplings. Secondly, the constraints on correlation functions can change drastically. For example, in an ordinary $d$-dimensional CFT, a scalar primary $\CO(x)$ cannot have a non-trivial one-point function $\langle \CO(x)\rangle$, as this would violate the conformal symmetry. However, in a defect CFT, the broken symmetries allow a scalar primary inserted in the ambient space at a distance $x_\perp$ in the normal directions away from the defect submanifold to have a non-trivial one-point function, albeit this is determined up to an overall constant,
\begin{align}\label{eq:defect-one-pt-fn-1}
    \langle\CO_\Delta(x_\perp)\rangle \sim \frac{\widehat{b}_\CO}{\Vert X \Vert_{\circ}^{\Delta}} \sim \frac{\widehat{b}_\CO}{|x_\perp|^{\Delta}}~,
\end{align}
where we introduced the shorthand $\Vert X \Vert_{\circ}:=\sqrt{X\circ X}$.  The constant $\widehat{b}_\CO$ is related to the coupling of the ambient operator $\CO$ to the defect degrees of freedom.  

As in ordinary CFTs, it is useful to express defect correlation functions in terms of defect conformal cross ratios. Since there are now mixed correlators with arbitrary numbers of ambient and defect fields, the space of defect cross ratios is much larger \cite{Billo:2016cpy, Gadde:2016fbj, Lauria:2017wav}. In the toy models discussed below, we will only compute correlation functions with up to two (ambient or defect) fields, and thus it will suffice to define the common cross ratios 
\begin{align}\label{eq:cross-ratios}
    \chi = -\frac{2X_1\bullet X_2}{\Vert X_1\Vert_{\circ} ~\Vert X_2\Vert_{\circ} }~,\qquad \cos\psi = \frac{X_1\circ X_2}{\Vert X_1\Vert_{\circ} ~\Vert X_2\Vert_{\circ} }~.
\end{align}

It can be surmised from the appearance of the ambient-to-defect coupling in the one-point function of \eq{defect-one-pt-fn-1} that there is a defect version of the OPE that expands ambient operators in terms of defect primaries and their towers of descendants.\footnote{Though it may seem odd at first glance that a single operator admits an OPE, one can utilize a sort of `method of images' for defect CFTs and rationalize the defect OPE of a single ambient insertion through its interaction with its `image' operator constrained by the particular conditions placed on fields in the limit of restriction to the defect submanifold \cite{Nishioka:2022ook}. A useful mnemonic is that for ambient insertions away from the defect there is a sort of `doubling' by the image operator, and hence, one-point functions of ambient operators have structures similar to those of two-point functions in ordinary CFTs. Similarly, two-point functions of ambient operators are reminiscent of four-point functions in ordinary CFTs, and so on.} Explicitly, for a scalar primary operator $\CO_\Delta$ in a unitary defect CFT with scaling dimension $\Delta >0$, we can expand an ambient insertion in terms of the defect spectrum as \cite{Billo:2016cpy, Lauria:2017wav}
\begin{align}\label{eq:defect-ope}
    \CO_\Delta = \sum_{\widehat{\CO},s}\sum_{n}(-1)^n\frac{\widehat{b}_{\CO\widehat{\CO}}~\Gamma(\widehat\Delta+1-\frac{p}{2})}{2^{2n}\Vert X \Vert_{\circ}^{\Delta-\widehat{\Delta}-2n}~\Gamma(n+1)\Gamma(\widehat\Delta +1+n-\frac{p}{2})}(\widehat\nabla^2)^n(w \circ \tilde{\nu})^{s}\widehat{\CO}_{\widehat{\Delta},s}~,
\end{align}
where $\widehat{\nabla}^2$ is the Laplacian along the defect submanifold, $b_{\CO\widehat{\CO}}$ is the ambient-defect coupling, $s$ is the $SO(q)_N$ quantum number, $w$ is a transverse polarization vector 
\begin{align}
    \widehat{\CO}_{\widehat{\Delta},s} = w^{i_1}\ldots w^{i_s}\widehat{\CO}_{\Delta;s,i_1\ldots,i_s}
\end{align}
 for a defect operator $\widehat{\CO}_{\widehat{\Delta},s}$ with transverse spin $s$ satisfying $w\circ w =0$, and $\widetilde\nu^I = X^I/\Vert X\Vert$ is a unit normal to the defect submanifold. The sum over $n>0$ spans the tower of descendants from the defect primary at $n=0$, i.e. $\widehat{\CO}_{\widehat{\Delta},s}$. Looking back to \eq{defect-one-pt-fn-1}, the overall coefficient can be identified with the coupling to the defect identity operator, $\widehat{b}_\CO\equiv \widehat{b}_{\CO\mathds{1}}$.

\section{Conformal defects in NN-FTs}\label{sec:defect-NNs}

In this section, we present a method to build NNs that realize defect conformal symmetry. This construction is inspired by certain approaches that formulate defect CFTs as deformations of theories with larger ambient global conformal symmetry groups. That is, we begin with a NN-FT that realizes a global $SO(d+1,1)$ symmetry. In the parameter space language reviewed above, this implies considering a probability distribution $P(\Theta)$ which is invariant under $d$-dimensional rotations, translations, scaling, and special conformal transformations. In the input space, which we take to be $M = \mathbb{R}^d$, we can think of this more naturally in the FT language, wherein the conformal transformations of a field $\Phi(X)$ correspond to spacetime symmetries. Hence, our objective within this section is to formalize the embedding of a $p$-dimensional submanifold $\mathfrak{M}_p\hookrightarrow M$ in the input and parameter space pictures of an NN-FT; this will be accomplished by modifying the architecture and parameter distribution in such a way as to realize at least an $SO(p+1,1) \subset SO(d+1,1)$ symmetry. We will see that generic choices that break the higher-dimensional conformal symmetries do not lead to features that typically characterize conformal defects, such as non-trivial one-point functions for `scalar primaries' and defect rotational symmetry, but it is nevertheless possible to judiciously choose an architecture and a factorization of $P(\Theta)$ that manifest the maximal conformal subgroup $SO(p+1,1)\times SO(d-p)_N\subset SO(d+1,1)$.

Consider a conformal field $\Phi(X)$ constructed from an NN acting on an input space spanned by the restriction of the embedding space coordinates $\{X\}$ to a physical point on the Poincar\'e section.  As reviewed in \sn{embedding-space}, one of the constraints imposed on the choice of architecture by the presence of conformal symmetry\footnote{While it is not a question that we directly address in the current work, should we allow for the architecture $\Phi_{s_1\ldots s_J}(X)$ to transform in a non-trivial, symmetric traceless, spin-$J$ representation of $SO(d)\subset SO(d+1,1)$, i.e. a spinning conformal field, there are additional constraints afforded by the conformal symmetry group in the form of transversality with respect to a set of auxiliary vectors $z^{s_i}$ encoding the spin of the field, e.g. $z^{s_1}\Phi_{s_1\ldots s_J} =0$. We will take up this question in future work.} is that $\Phi(X)\equiv\Phi_\Delta(X)$ must be a homogeneous function of $X$ with scaling weight $\Delta$, i.e. $\Phi_\Delta(X)\to\Phi_\Delta(\lambda X) = \lambda^{-\Delta}\Phi_\Delta(X)$ under scale transformations $X\to \lambda X$.  Since our aim is to preserve a $p$-dimensional conformal subgroup, with $p<d$, we require that any conformal defect constructed from an NN must similarly be a homogeneous function of the subset of inputs spanning $\mathfrak{M}_p\hookrightarrow\mathbb{R}^d$ \cite{Billo:2016cpy}.  Further, if the additional normal bundle symmetry is preserved, then we also require that the network architecture transforms in an irreducible representation labeled by the `transverse spin' $s$.  

With these conditions in mind, we can begin to construct conformal defects from NNs. The first requirement in the parameter-space representation is that $P(\Theta)$ be chosen such that it is invariant under the defect conformal symmetry group.  This is clearly a necessary condition, but it is not sufficient to meet all of the criteria above. 

To illustrate this point, let us consider a $d$-dimensional conformal field $\Phi_\Delta(X)$ with scaling dimension $\Delta$ transforming as a scalar under $SO(d)$, and with a parameter distribution $P(\Theta)$ that is invariant under $SO(d+1,1)$, e.g. $\Phi_\Delta(X)= \Phi_\Delta(\Theta\cdot X)$ and $P(\Theta)= \CN(\Theta\mid 0,\sigma^2\mathds{1}_d)$. Now, let us fix the architecture and simply replace $P(\Theta)$ with a distribution $P^\prime(\Theta)$ which is invariant under $SO(p+1,1)\times SO(q)_N\subset SO(d+1,1)$.  In order to render the defect symmetries manifest in the architecture, we split $\Theta\cdot X \to \Theta\bullet X + \Theta\circ X$. The one-point function of $\Phi_\Delta(X)$ with parameters $\Theta\sim P^\prime$ is then
\begin{align}\label{eq:test-one-pt-fn}
    \mathbb{E}_{\Theta\sim P^\prime}\left[\Phi_\Delta(X)\right] = \int \mathrm{d}\Theta~ P^\prime(\Theta)~ \Phi_\Delta(\Theta\bullet X + \Theta\circ X)~.
\end{align}
If $\Theta\circ X\neq 0$, then this one-point function should take the form of \eq{defect-one-pt-fn-1}.  However, this functional form is not guaranteed, and unless the non-linearity is chosen carefully, it is not clear that the integral will (i)  be non-vanishing, (ii) have definite scaling dimension $\widehat{\Delta}$ under the defect conformal symmetry group, and (iii) result in a form that can be organized into irreducible representations of $SO(q)_N$. Indeed, even in the simple case of the toy models that we consider below, namely $\Phi_\Delta(X) = (\Theta\cdot X)^{-\Delta}$ with $\Delta <0$ and a factorized $P(\Theta) = \widehat{P}(\Theta)\widetilde{P}(\Theta)$ with $\widehat{P}$ ($\widetilde{P}$, resp.) invariant under $SO(p+1,1)$ ($SO(q)_N$, resp.), the integral above does not result in a simple monomial in $1/|X\circ X|^{\Delta}$. However, we will see below that there is a scheme in which to interpret correlation functions of `ambient fields' where the appearance of linear combinations of such terms is a feature, rather than a bug.

This insufficiency of a simple modification of the parameter distribution suggests that, in order to realize a conformal defect in an NN-FT, we must also allow for alterations of the architecture which fully capture the reduced symmetries.  Naively, we could factorize $\Phi(X)$ into the product of two auxiliary architectures
\begin{align}\label{eq:naive-defect-arch}
    \Phi(X) = \widehat{\varphi}(X) \widetilde{\varphi}(X)~,
\end{align}
where $\widehat{\varphi}(X)\equiv \widehat{\varphi}(\Theta\bullet X)$ and $\widetilde{\varphi}(X)\equiv\widetilde{\varphi}(\Theta\circ X)$ act on the tangent and normal spaces to the defect in the embedding space, respectively.  We would then require that the factorized architecture transform with definite scaling weight $\widehat{\Delta}$ under $SO(p+1,1)$ Weyl rescalings, i.e. $\widehat{\varphi}_{\Delta}(\lambda X) = \lambda^{-\widehat{\Delta}}\widehat{\varphi}_{\Delta}(X)$, and that $\widetilde{\varphi}(X)$ transform in an irreducible representation of $SO(q)_N$ labeled by its transverse spin $s$. The construction of an arbitrary defect is then specified by the choice of an appropriate parameter distribution $P^\prime(\Theta)$ and auxiliary architectures $\widehat{\varphi}$ and $\widetilde{\varphi}$.  

In generalizing \eq{naive-defect-arch}, we can take inspiration from the defect OPE, and instead decompose the `ambient field' into a sum over pairs $(\widehat{\varphi},~\widetilde{\varphi})$. Let us pause here for a moment to consider the implications of the above form of $\Phi_{\Delta}(X)$. Drawing from \eq{defect-ope}, we propose that, generically, the NN architecture
 \begin{align}\label{eq:general-defect-arch}
     \Phi_{\Delta}(X) = \sum_{\widehat{\Delta},s} \widehat{b}_{\widehat{\Delta},s}\widehat{\varphi}_{\widehat{\Delta}}(X)\widetilde{\varphi}_s(X)~
 \end{align}
engineers a conformal defect. Though the concept of expanding in the tower of `primaries' is straightforward and easy to understand from the NN perspective,\footnote{It should be stressed that what we are referring to as a `primary' in the NN sense does not easily map on to a `primary' in the CFT sense.  Here, we simply mean that a (scalar) `primary' NN is an architecture $\Phi_\Delta$ whose expectations have the form of correlation functions of (scalar) primary operators; in particular, the two-point function. We leave addressing the subtle points about conformal representation theory in NN-FTs to future work.} we leave the exploration of `spinning' blocks and descendants for future work.

For a certain class of theories, which we refer to as `monomial' in the following section, we observe that the sum over $(\widehat{\Delta},s)$ terminates at a finite order, dependent on the dimension $\Delta$ of the ambient field $\Phi_\Delta$.  The decomposition of $\Phi_\Delta$ into the form of \eq{general-defect-arch}, together with the modified parameter distribution $P^\prime(\Theta)$ suited to the defect conformal symmetries, will allow us to compute the one- and two-point correlation functions for both ambient $\Phi_\Delta(X)$ and defect $\widehat{\varphi}_{\widehat{\Delta}}(X)$ fields.  Importantly, for the $\langle \Phi_{\Delta_1}(X_1)\Phi_{\Delta_2}(X_2)\rangle$ correlator, the truncation of this OPE-like expansion in the defect channel will allow us to easily find closed-form expressions for the defect couplings and conformal blocks order-by-order in $\widehat{\Delta}$ and $s$, by iteratively solving the $SO(p+1,1)$ and $SO(q)_N$ Casimir equations. On the other hand, the same will not be true for the so-called `reciprocal' theories, for which, as expected, the defect OPE contains an infinite number of terms for any value of $\Delta>0$.

\section{Examples}\label{sec:examples}

In this section, we present concrete models that demonstrate the utility of the formalism presented above.  In particular, we will focus on a single family of models of the form $\Phi_\Delta(X) = (\Theta\cdot X)^{-\Delta}$. Throughout, we consider network parameters drawn from an isotropic Gaussian distribution $\Theta\sim P=\mathcal{N}(0,\sigma^2\mathds{1}_d)$, and we treat the cases with positive and negative scaling dimension separately. The latter class of toy models will be presented first, as it offers a direct path to computing expectations in the NN. Since these theories are characterized by $-\Delta=:\alpha >0$, we will make a useful, albeit abusive, distinction by referring to them as `monomial' theories, as opposed to the `reciprocal' theories making up the $\Delta >0$ class. We will explore the latter in a later subsection.

\subsection{Monomial NN-FTs}\label{sec:non-unitary}
In this section, we use the formalism discussed above to introduce conformal defects in a class of $\Delta<0$ theories studied in \cite{Halverson:2024axc}. We compute correlation functions involving both ambient and defect fields.  In doing so, we will show that the two-point correlation functions of ambient-ambient and ambient-defect fields are exactly solvable in terms of hypergeometric functions \cite{Billo:2016cpy}. This will allow us to easily compare the resulting expressions to the expectations derived from the defect OPE, and compute exact analytic expressions for the defect conformal blocks.  Lastly, motivated by the solution to the Casimir equations in the defect channel, we will propose a direct NN interpretation in terms of the spectrum of defect fields organized according to $SO(p+1,1)\times SO(q)_N$ quantum numbers.

The first example that we will consider in order to ground our discussion and illustrate the main points of our formalism is given by the following ambient network architecture 
\begin{align}
    \Phi_\alpha(X) = (X\cdot \Theta)^{\alpha}~,
\end{align}
where we defined $\alpha:=-\Delta>0$, i.e. the negative of the scaling weight under the action of dilatations, for notational simplicity.  For the remainder of this section, we will only consider the case $\alpha\in\mathbb{N}$, but for monomial theories, the analytic continuation to $\alpha\in\mathbb{R}_+$ is straightforward. By breaking the ambient global conformal symmetry group to its defect subgroup, we decompose into sub-architectures $\widehat{\varphi}$ and $\widetilde{\varphi}$, respectively encoding `defect' and `normal' networks as in \eq{defect-ope}. The preservation of defect conformal symmetry constrains the form of the sub-architectures to be homogeneous polynomials in $X\bullet\Theta$ and $X\circ \Theta$, respectively.  To make this clear, we can use \eq{dot-decomposition} and the binomial expansion of $\Phi_\alpha(X)$ to determine the structure of $\widehat\varphi$ and $\widetilde{\varphi}$ as follows,
\begin{align}
    (X\cdot\Theta)^\alpha = \sum_{\widehat{\alpha}=0}^\alpha \frac{\Gamma(\alpha+1)}{\Gamma(\widehat\alpha +1)\Gamma(\alpha-\widehat\alpha +1)}(X\bullet\Theta)^{\widehat{\alpha}}(X\circ\Theta)^{\alpha-\widehat{\alpha}}.
\end{align}
As described above, we also require an appropriate deformation of the distribution of network parameters.  For simplicity, we assume that $P(\Theta)$ factorizes into 
\begin{align}\label{eqdefect-distbn}
    P(\Theta) = \CN(\Theta\mid\vec{0}_{p+2},\widehat{\sigma}\mathds{1}_{p+2})~\CN(\Theta\mid\vec{0}_q,\widetilde{\sigma}\mathds{1}_q)=:\widehat{P}(\Theta)\widetilde{P}(\Theta)~,
\end{align}
i.e. the parameters in $X\bullet\Theta$ ($X\circ\Theta$, resp.) are distributed according to a multivariate
Gaussian distribution with zero mean and isotropic variance $\widehat{\sigma}$ ($\widetilde{\sigma}$, resp.).  With these choices of factorization of $P(\Theta)$ and decomposition of $\Phi_\alpha(X)$, we will compute various correlation functions involving both ambient and defect fields.

\subsubsection*{One-point functions}
To begin, let us compute, as a sanity check, the one-point function of the defect field $\widehat{\varphi}_{\alpha}(X) = (X\bullet \Theta)^{\widehat\alpha}$. We find
\begin{align}
    \langle\widehat\varphi(X)\rangle &= \mathbb{E}_{\widehat{P}(\Theta)}\left[\widehat\varphi(X)\right] \sim \widetilde\sigma^2 (X\bullet X)^{\widehat{\alpha}/2}. %\overset{\rm P.S.}{=}0
\end{align}
Since the field $\widehat\varphi(X)$ is restricted to the submanifold defined by $X\circ X=0$, and so $X\bullet X\overset{\rm P.S.}{=}0$, the pullback of $\langle\widehat\varphi(X)\rangle$ to the Poincar\'e section vanishes unless $\alpha=0$. This finding is in line with the expectation that the only $SO(p+1,1)$-scalar, defect-localized field with non-vanishing one-point function is the identity operator.  

Proceeding with the more interesting case of the one-point function of an ambient scalar field $\Phi_\alpha(X)$, we use the binomial expansion to find
\begin{align}\label{eq:monomial-one-pt-fn}\nonumber
    \langle\Phi_\alpha(X)\rangle&=\mathbb{E}_{P(\Theta)}\left[\Phi_\alpha(X)\right]\\\nonumber
    &= \sum_{\widehat{\alpha}=0}^\alpha \frac{\Gamma(\alpha+1)}{\Gamma(\widehat\alpha+1)\Gamma(\alpha-\widehat\alpha+1)}\mathbb{E}_{\widehat{P}(\Theta)}\left[(X\bullet \Theta)^{\widehat{\alpha}}\right]\mathbb{E}_{\widetilde{P}(\Theta)}\left[(X\circ \Theta)^{\alpha-\widehat{\alpha}}\right]\\
    &= \frac{\Gamma(2\alpha+1)}{2^\alpha\Gamma(\alpha+1)}(\widehat\sigma^2 X\bullet X+\widetilde\sigma^2 X\circ X)^\alpha \overset{\rm P.S.}{=}\frac{\Gamma(2\alpha+1)}{2^\alpha\Gamma(\alpha+1)}(\widehat\sigma^2-\widetilde\sigma^2)^\alpha \Vert X\Vert_\circ^{2\alpha}~.
\end{align}
  In moving from the second to the third line, we note that the expectation values $\mathbb{E}[\ldots]$ are only non-vanishing if $\widehat\alpha \in 2\mathbb{N}$ and $\alpha-\widehat\alpha\in2\mathbb{N}$. In the final equality, we restrict to the Poincar\'e section. In contrast to the defect-localized field $\widehat\varphi(X)$, the ambient field $\Phi_\alpha(X)$ is not restricted to the submanifold $X\circ X\overset{\rm P.S.}{=}0$, and we see that the one-point function of this defect architecture is precisely of the form that one would expect for the one-point function of an ambient scalar of dimension $\alpha = -\Delta$ in the presence of a conformal defect \cite{Billo:2016cpy}.  One further point to note is that the overall normalization vanishes when $\widehat{\sigma}=\widetilde{\sigma}$: in that case, the defect becomes trivial.

In the binomial expansion in the first line of \eq{monomial-one-pt-fn}, we see a structure similar to that of the defect OPE.  Indeed, looking back to \eq{defect-ope} and dropping the towers of descendants ($n=0$), we can draw a direct analogy with
\begin{align}
    \Phi_\alpha(X) = \sum_{\widehat{\alpha}}b_{\Phi \widehat{\varphi}} \widetilde{\varphi}_{\alpha-\widehat{\alpha}}(X) \widehat{\varphi}_{\widehat{\alpha}}(X)~.
\end{align}
Then, taking $\widehat{\varphi}_{\widehat{\alpha}}(X) = (X\bullet \Theta)^{\widehat\alpha}$ and $\widetilde{\varphi}_s = (X\circ \Theta)^s$, as well as truncating the defect spectrum to $\widehat\alpha\leq\alpha$, we recover the result above, with $b_{\Phi\widehat{\CO}} = \frac{\Gamma(\alpha+1)}{\Gamma(\widehat\alpha +1)\Gamma(\alpha-\widehat\alpha+1)}$. 

What is particularly interesting about the one-point function for this monomial theory is that the expansion in defect fields truncates at a finite order.  Typically, in defect CFTs, the one-point function of a scalar primary expanded into the defect OPE requires an infinite tower of defect primaries.  Looking back to the ambient theory on its own, the finite nature of this expansion may not be as surprising; indeed, the `non-unitary' $\alpha=1$ theory in \cite{Halverson:2024axc} had exactly solvable conformal blocks in the four-point function expressible as a sum of a few monomials in the cross ratios.  In the following subsection, where we consider the two-point function, which in the case of two ambient insertions is the analog of the ambient theory four-point function; we also find relatively simple, exact expressions, and again find that an expansion in terms of defect two-point functions terminates at a finite order.

\subsubsection*{Two-point functions}
We now compute two-point functions of the form $\mathbb{E}\left[\Phi_\alpha(X)^{k}\widehat\varphi(X)^{2-k}\right]$ with $k\in\{0,1,2\}$. That is, in the language of defect CFTs, we consider defect-defect, mixed defect-ambient, and ambient-ambient two-point functions. Due to the residual $SO(p+1,1)$ symmetry, the defect-defect (i.e. $k=0$) case is straightforward to read off:
\begin{align}\label{eq:defect-defect-2pt-fn}
    \mathbb{E}_{\widehat{P}(\Theta)}\left[\widehat\varphi_{\widehat{\alpha}_1}(X_1)\widehat\varphi_{\widehat{\alpha}_2}(X_2)\right] \overset{\mathrm{P.S.}}{=} \delta_{\widehat\alpha_1\widehat\alpha_2}\Gamma(\widehat\alpha_1+1)\widehat\sigma^{2\widehat\alpha_1}(X_1\bullet X_2)^{\widehat\alpha_1}~,
\end{align}
which is simply the two-point function one would expect from $SO(p+1,1)$-invariant NN conformal fields \cite{Halverson:2024axc}.

Less trivial is the mixed correlation function, given by $k=1$. Nevertheless, we can use the binomial expansion of $\widetilde{\varphi}_s(X)$ (or the defect OPE), as in the case of the one-point function, to find
\begin{align}\hspace{-1cm}
    \mathbb{E}\left[\Phi_{\alpha_1}(X_1)\widehat\varphi_{\widehat{\alpha}_2}(X_2)\right] &= \sum_{\widehat\alpha_1=0}^{\alpha_1}\frac{\Gamma(\alpha_1+1)}{\Gamma(\widehat{\alpha}_1+1)\Gamma(\alpha_1-\widehat\alpha_1+1)}\mathbb{E}_{\widehat{P}(\Theta)}\left[(X_1\bullet\Theta)^{\widehat{\alpha}_1}(X_2\bullet\Theta)^{\widehat{\alpha}_2}\right]\times\nonumber\\
    &\qquad\qquad \times\mathbb{E}_{\widetilde{P}(\Theta)}\left[(X_1\circ\Theta)^{\alpha_1-\widehat{\alpha}_1}\right]~.
\end{align}
We can then restrict to the Poincar\'e section and exploit the defect-defect expectation value in \eq{defect-defect-2pt-fn} to compute the expectation with respect to $\widehat{P}(\Theta)$ and collapse the sum to find a non-vanishing result only if $\widehat\alpha_2\leq \alpha_1$:
\begin{align}
    \mathbb{E}\left[\Phi_{\alpha_1}(X_1)\widehat\varphi_{\widehat{\alpha_2}}(X_2)\right] \overset{\rm P.S.}{=} \frac{\Gamma(\alpha_1+1)\Gamma(\widehat{\alpha}_2+1)}{\Gamma(\alpha_1-\widehat{\alpha}_2+1)}\widehat{\sigma}^{2\widehat\alpha_2}(X_1\bullet X_2)^{\widehat\alpha_2}\mathbb{E}_{\widetilde P(\Theta)}\left[(X_1\circ\Theta)^{\alpha_1-\widehat\alpha_2}\right]~.
\end{align}
The remaining expectation is non-vanishing only if $\alpha_1-\widehat{\alpha}_2 =: 2\alpha_{12}\in 2\mathbb{N}$, whence
\begin{align}
    \mathbb{E}\left[\Phi_\alpha(X_1)\widehat\varphi_{\widehat{\alpha}_2}(X_2)\right] \overset{\rm P.S.}{=} \frac{2^{\alpha_{12}}\widehat\sigma^{2\widehat{\alpha}_2}\Gamma(\alpha_1+1)}{\Gamma(\alpha_{12}+1)}(\widehat{\sigma}^2-\widetilde\sigma^2)^{\alpha_{12}}\Vert X_2\Vert_\circ^{2\alpha_{12}}(X_1\bullet X_2)^{\widehat{\alpha}_2}~.
\end{align}

The last case of two-point functions left to consider is the ambient-ambient (i.e. $k=2$) channel; compared to the cases above, this will require a somewhat greater amount of care, but it will also open up an interesting interpretative framework on the NN side, due to the emergence of defect conformal blocks. Although we will find an exact closed-form expression for this two-point function, it will be useful, in order to develop the interpretive framework in the NN picture, to also employ a description in terms of the defect OPE. We can use the binomial expansion as in the previous cases, and pull back to the Poincar\'e section to find 
\begin{align}
    \mathbb{E}[\Phi_{\alpha_1}(X_1)\Phi_{\alpha_2}(X_2)] = \Vert X_1\Vert^{\alpha_1}  \Vert X_2\Vert^{\alpha_2}  f(\chi,\psi)~,
\end{align}
where we have used the conformal cross ratios in \eq{cross-ratios}. Using the intermediate results in \app{two-point-fns}, the sums resulting from the binomial expansions have been expressed in
\begin{align}\label{eq:full-two-pt-fn-block}
    f(\chi,\psi) =\Gamma(\alpha_1+1)\Gamma(\alpha_2+1) \sum_{d_1=0}^{\alpha_1}\sum_{d_2=0}^{\alpha_2}\sum_{m_1,m_2}{}^{\!\!\!\!{{\text{\normalsize{$\prime$}}}}}(-1)^{\frac{d_1+d_2}{2}}\widetilde{{\Upsilon}}^{d_1,d_2}_{m_1}\widehat{\Upsilon}^{\alpha_1-d_1,\alpha_2-d_2}_{m_2}\left(\frac{\chi}{2}\right)^{m_1}\cos^{m_2}\psi,
\end{align}
where we defined
\begin{align}
    \Upsilon^{a,b}_c \equiv \frac{2^{\frac{2c-a-b}{2}}\sigma^{a+b}}{\Gamma(c+1)\Gamma(\frac{a-c+2}{2})\Gamma(\frac{b-c+2}{2})}
\end{align}
with $\widetilde{\Upsilon}$ ($\widehat{\Upsilon}$, resp.) indicating $\widetilde\sigma$ ($\widehat\sigma$, resp.) and the restricted sum
\begin{align}
    \sum_{m_1,m_2}{}^{\!\!\!\!{{\text{\normalsize{$\prime$}}}}} \equiv \sum_{\substack{
        m_1=0\\
        d_1-m_1\in2\mathbb{N}\\
        d_2-m_1\in2\mathbb{N}}}^{\operatorname{min}(d_1,d_2)}\sum_{\substack{
        m_2=0\\
        \alpha_1-d_1-m_2\in2\mathbb{N}\\
        \alpha_2-d_2-m_2\in2\mathbb{N}}}^{\operatorname{min}(\alpha_1-d_1,\alpha_2-d_2)}.
\end{align}
We can then resum $f(\chi,\psi)$ using hypergeometric functions (w.l.o.g., choosing $\alpha_1> \alpha_2$)
\begin{align}\label{eq:general-two-pt-fn}
    &\mathbb{E}[\Phi_{\alpha_1}(X_1)\Phi_{\alpha_2}(X_2)] = \frac{\Gamma(\alpha_2+1)}{2^{\frac{\alpha_2+\alpha_1}{2}}\Gamma(\frac{\alpha_1-\alpha_2+2}{2})}(\widetilde\sigma^2-\widehat{\sigma}^2)^{\frac{\alpha_1-\alpha_2}{2}} \Vert X_1\Vert_\circ^{\alpha_1}  \Vert X_2\Vert_\circ^{\alpha_2} \times\\\nonumber
    &\times \left(2\widetilde{\sigma}^2\cos\psi - \widehat{\sigma}^2\chi\right)^{\alpha_2}{}_2F_1\left(\frac{1-\alpha_2}{2},-\frac{\alpha_2}{2},\frac{\alpha_1-\alpha_2+2}{2},\frac{4(\widetilde\sigma^2-\widehat\sigma^2)^2}{(2\widetilde\sigma^2\cos\psi-\widehat\sigma^2\chi)^2}\right),
\end{align}
where the right-hand side is non-vanishing only if $\alpha_1+\alpha_2\in 2\mathbb{N}$, and where ${}_2F_1$ denotes the Gaussian hypergeometric function. Note that choosing $\alpha_2>\alpha_1$ instead amounts to exchanging the indices $1\leftrightarrow 2$.

As an aside, we note that it is possible to bypass the binomial expansion and instead derive \eq{general-two-pt-fn} in a different fashion. Specifically, if we denote the covariance matrix for the joint multivariate normal distribution by $\Sigma = \operatorname{diag}(\widehat\sigma^2\mathds{1}_{p+1},\widetilde\sigma^2\mathds{1}_q)$, we can define the moment generating function (MGF) for the two-point correlation function as
\begin{align}\label{eq:gaussian-mgf}
    M(s,t) = \mathbb{E}\left[e^{s X_1\cdot\Theta +t X_2\cdot\Theta}\right] = e^{\frac{1}{2}(s^2 X_1^{\rm T}\Sigma X_1 + t^2 X_2^{\rm T}\Sigma X_2) + s t X_1^{\rm T}\Sigma X_2}~,
\end{align}
where $X_i^{\rm T}\Sigma X_j = \widehat{\sigma}^2X_i\bullet X_j + \widetilde{\sigma}^2 X_i\circ X_j$. Then, functionally differentiating with respect to the `sources' $s$ and $t$,
\begin{align}
    \mathbb{E}\left[\Phi_{\alpha_1}(X_1)\Phi_{\alpha_2}(X_2)\right] =\left. \frac{\partial^{\alpha_1}}{\partial s^{\alpha_1}}\frac{\partial^{\alpha_2}}{\partial t^{\alpha_2}}M(s,t)\right|_{s=t=0}~,
\end{align}
using the trinomial expansion to extract the $s^{\alpha_1}t^{\alpha_2}$ term, and resumming the result, we find exactly the same expression in \eq{general-two-pt-fn}.  This method of computing the correlation functions will be particularly useful in analytically continuing to positive scaling dimensions, $\Delta = -\alpha >0$, in the following subsection.

Having computed the exact form for the ambient-ambient two-point function, we can now proceed to compare with what we would find by employing the `defect OPE'.  In standard defect CFTs, the two-point function $\langle\CO_1\CO_2\rangle$ of ambient scalar primaries can be computed by taking the OPE in two different channels related by crossing symmetry, namely the defect and ambient channels \cite{Billo:2016cpy}
\begin{align}
    \begin{split}
        \langle\CO_1(X_1)\CO_2(X_2)\rangle =&\,\, \Vert X_1 \Vert_{\circ}^{\alpha_1}\Vert X_2 \Vert_{\circ}^{\alpha_2}\sum_{\widehat{\CO},s} \widehat{b}_{1\widehat\CO}\widehat{b}_{2\widehat\CO}\widehat{f}_{\widehat\alpha,s}(\chi,\psi)~, \hspace{1.6cm}{\rm(defect)}\\
         =& \,\,\xi^{\frac{\alpha_1+\alpha_2}{2}}\Vert X_1 \Vert_{\circ}^{\alpha_1}\Vert X_2 \Vert_{\circ}^{\alpha_2}\sum_{k,J}C_{12k}a_k f_{\alpha_k,J}(\xi,\psi)~, \quad{\rm(ambient)}
    \end{split}
\end{align}
where in the ambient channel it is more convenient to employ the cross ratio
\begin{align}
    \xi = - \frac{2X_1\cdot X_2}{ \Vert X_1\Vert_\circ~ \Vert X_2\Vert_\circ}
\end{align}
than the aforementioned cross ratio $\chi$. The block appearing in each summand must satisfy the appropriate Casimir equation: $SO(p+1,1)\times SO(q)_N$ for the defect channel and $SO(D+1,1)$ for the ambient channel.  Because the defect symmetry factorizes into a product group, the Casimir equations for the tangent and normal bundle symmetries can be solved independently \cite{Billo:2016cpy}.

Focusing firstly on the defect channel, the ambient Casimir operator $J_{MN}J^{MN} = X_{[M}\partial_{N]}X^{[M}\partial^{N]}$ decomposes into the $SO(p+1,1)$ Casimir $\widehat{J}_{AB}\widehat{J}^{AB}$ and the $SO(q)_N$ Casimir $\widetilde{J}_{IJ}\widetilde{J}^{IJ}$.  Defining $C_{\widehat\alpha,s}\equiv \widehat\alpha(\widehat\alpha-p)+s(s+q-2)$, we need to simultaneously solve the eigenvalue equations
\begin{subequations}
\begin{align}\label{eq:sop-casimir-eqn}
        (\widehat{J}_{AB}\widehat{J}^{AB}+2C_{\alpha,0})\widehat{f}_{\widehat\alpha,s} &=0\\\label{eq:soq-casimir-eqn}
        (\widetilde{J}_{IJ}\widetilde{J}^{IJ}+2C_{0,s})\widehat{f}_{\widehat\alpha,s}&=0~.
\end{align}
\end{subequations}
Before proceeding with the more general case $\alpha_1\neq\alpha_2$, let us consider a simple example that will illustrate an interpretation of these defect blocks. In particular, consider the limiting case $\alpha_1=\alpha_2=\alpha$. When $\alpha = 1$,
\begin{align}
f(\chi,\psi) = \widetilde{\sigma}^2\cos\psi - \frac{\widehat{\sigma}^2}{2}\chi~.  
\end{align}
With some algebraic manipulations, we can easily rearrange this in to defect blocks that satisfy the appropriate Casimir equations with specified $SO(p+1,1)$ scaling weight $\widehat{\alpha}$ and $SO(q)_N$ spin $s$.  Since $\alpha_1=\alpha_2$, we denote $\widehat{b}_{1\widehat{\CO}}\widehat{b}_{2\widehat{\CO}}=: \widehat{b}_{\widehat{\alpha},s}$ and find
\begin{align}
\begin{split}
 f(\chi,\psi) &= \widehat{b}_{0,0}\widehat{f}_{0,0} +\widehat{b}_{1,0}\widehat{f}_{1,0} + \widehat{b}_{0,2}\widehat{f}_{0,2}\\
&=\frac{\widetilde{\sigma}^2}{q} + \left(-\frac{\widehat{\sigma}^2}{2}\right)\chi + \frac{\widetilde{\sigma}^2}{q}\left( q\cos\psi -1\right)~.
\end{split}
\end{align}
In the case $\alpha=2$,
\begin{align}
    f(\chi,\psi) = 2\widehat{\sigma}^4\cos^2\psi + \frac{\widetilde{\sigma}^4}{2}\chi^2 - 2\widehat{\sigma}^2\widetilde{\sigma}^2\chi\cos\psi +(\widehat{\sigma}^2-\widetilde{\sigma}^2)^2~.
\end{align}
Choosing the $s>0$ and $\alpha >0$ branches, we can write
\begin{align}
    f(\chi,\psi) = \widehat{b}_{0,0}\widehat{f}_{0,0} + \widehat{b}_{1,0}\widehat{f}_{1,0}+ \widehat{b}_{0,2}\widehat{f}_{0,2} + \widehat{b}_{1,2}\widehat{f}_{1,2} + \widehat{b}_{2,0}\widehat{f}_{2,0} +\widehat{b}_{0,4}\widehat{f}_{0,4}~,
\end{align}
where the defect blocks $\widehat{f}_{\widehat{\alpha},s}$ and couplings $\widehat{b}_{\widehat{\alpha},s}$ are given by
\begin{align}
\begin{split}
    \widehat{b}_{0,0}\widehat{f}_{0,0} &= (\widehat\sigma^2-\widetilde{\sigma}^2)^2+\frac{2\widetilde{\sigma}^4}{p+2}+\frac{6\widehat{\sigma}^4}{q(4+q)}~,\\
    \widehat{b}_{1,0}\widehat{f}_{1,0} &= -\frac{2\widehat{\sigma}^2\widetilde{\sigma}^2}{q}\chi~,\\
    \widehat{b}_{0,2}\widehat{f}_{0,2} &= \frac{12\widehat{\sigma}^4}{q(q+4)}(q\cos\psi -1)~,\\
   \widehat{b}_{1,2}\widehat{f}_{1,2} &= -\frac{2\widehat{\sigma}^2\widetilde{\sigma}^2}{q}\chi(q\cos\psi-1)~,\\
   \widehat{b}_{2,0}\widehat{f}_{2,0} &=\frac{\widetilde{\sigma}^4}{2}\Big(\chi^2 - \frac{4}{2+p}\Big)~,\\
   \widehat{b}_{0,4}\widehat{f}_{0,4}&= 2\widehat{\sigma}^4\Big(\cos^2\psi -\frac{6\cos\psi}{q+4}+\frac{3}{(q+2)(q+4)}\Big)~.
 \end{split}
\end{align}
Generalizing to arbitrary $\alpha_1=\alpha_2$, a block with arbitrary $SO(p+1,1)$ and $SO(q)_N$ quantum numbers ($\widehat\alpha,s$) takes the form
\begin{align}\label{eq:general-casimir-soln}
    \widehat{f}_{\widehat{\alpha},s} = \chi^{\widehat{\alpha}}\cos^s\psi~ {}_2F_1\left(\frac{1-\widehat{\alpha}}{2},-\frac{\widehat\alpha}{2},\frac{2-2\widehat{\alpha}-p}{2},\frac{4}{\chi^2}\right){}_2F_1\left(\frac{1-2s}{2},-s, \frac{4-4s-q}{2}, \sec\psi\right)
\end{align}
as expected from symmetry arguments \cite{Billo:2016cpy}.

In these simple examples, we can find the basis for interpreting the defect conformal fields as a neural network.  The fundamental lesson that we would like to emphasize here is that explicitly breaking conformal symmetries realized either in the space of inputs or parameters gives rise to a space of `fundamental' networks, here represented by the defect fields in the various `exchange channels' labeled by their $(\widehat{\alpha},s)$ quantum numbers.   What the simple ambient-ambient two-point function above demonstrates is that, unlike the expansion of the one-point function which could be fully reconstructed from the $s=0$ primaries, defect fields with non-zero spin under $SO(q)_N$ are crucial components in the expansion of the ambient correlator in the defect spectrum.  Operationally, on the NN side, this means that in order to compute expectation values for any $SO(d+1,1)$-invariant network, e.g. $n$-point correlation functions, with respect to a parameter distribution that only respects an $SO(p+1,1)\times SO(q)_N$ symmetry, one can instead consider a weighted sum of expectations of explicitly $SO(p+1,1)$-invariant, homogeneous networks with prescribed scaling-weight parameters $\widehat\alpha$ transforming in specific irreducible representations of $SO(q)_N$.  

Solving the ambient channel, though guaranteed to be equivalent to the defect expansion by crossing symmetry, is not generally possible, and often one needs to limit oneself to a lightcone expansion in order to find a closed-form expression \cite{Billo:2016cpy}.  Moreover, this would require a careful analysis of the ambient `OPE-like' expansion, which would in turn necessitate a study of spinning NN conformal fields, i.e. fields with non-trivial $SO(d+1,1)$ spin, as opposed to defect fields with transverse spin under the global $SO(q)_N$ symmetry. We leave the analysis of such spinning NN conformal fields for future work.

\subsection{Reciprocal NN-FTs}\label{sec:unitary}

In this subsection, we construct defects from NNs with positive scaling dimensions. We consider fields of the form $\Phi_\Delta(X) = (\Theta\cdot X)^{-\Delta}$, where the parameters $\Theta$ are drawn from a Gaussian distribution. We will refer to such models as `reciprocal', in contrast to the monomial theories discussed in the previous sections. Note that, despite having positive dimension, these theories are not necessarily unitary. For instance, a `free scalar' NN field theory would need to satisfy the usual bound $\Delta \geq \frac{d-2}{2}$  in order to be properly classified as `unitary'; here, we only restrict to dimensions satisfying the weaker bound $\Delta>0$.

From the perspective of computing expectations over the network parameters, such reciprocal models are more challenging due to the singularity at the origin of parameter space. Nevertheless, with standard techniques from the study of QFTs, we can easily evaluate these expectations and compute correlators of the field $\Phi$ in the presence of a defect.

\subsubsection*{Defining the integral}
As in the monomial case, we wish to consider correlators built from the product of two Gaussian distributions, $P(\Theta)=\mathcal{N}(\Theta\mid0,\hat\sigma^2\mathds{1}_{p+2})~\mathcal{N}(\Theta\mid0,\tilde\sigma^2\mathds{1}_q)$. Crucially, however, the positive nature of the scaling dimension $\Delta$ of the NN architecture renders the naive integrals ill-defined. We can circumvent this problem by assigning a particular value to these integrals, through a combination of analytic continuation and regularization. Let us describe this procedure below.\bigskip

Given an ambient conformal field $\Phi$ with (positive) scaling dimension $\Delta$, we may consider its formal expansion as a series in the normal distance to the defect, 
\begin{align}\label{eq:reciprocal-defectOPE}
    \Phi_\Delta(X) &= (X\bullet\Theta+X\circ\Theta)^{-\Delta} = \sum_{n=0}^{\infty}\frac{(\Delta)_n}{n!}(-1)^n\widetilde{\varphi}_{-n}(X)\widehat{\varphi}_{\Delta+n}(X)~,
\end{align}
where $\widehat\varphi_{\Delta+n}(X) = (X\bullet\Theta)^{-\Delta-n}$ is a defect network and $\widetilde{\varphi}_{-n}(X)=(X\circ\Theta)^{n}$ is a normal network, as in \eq{defect-ope}. This expansion displays two important structures. Firstly, for any $\Delta>0$, the only normal fields $\widetilde{\varphi}$ that contribute to the ambient field $\Phi$ are those with negative scaling dimension; hence, the reciprocal theory will mimic a monomial theory in that subsector, with well-defined correlators given in terms of the moments of the Gaussian distribution $\widetilde{P}({\Theta})=\mathcal{N}(\Theta\mid0,\tilde\sigma^2\mathds{1}_q)$. The other important structure that appears in \eq{reciprocal-defectOPE} is a tower of defect fields $\widehat{\varphi}$ with strictly increasing (positive) scaling dimensions. This tower of fields starts at $\Delta>0$; therefore, it never crosses the identity defect operator, which explains the trivial result for the one-point function of an ambient field insertion detailed below. 

Since correlators within the normal subsector are well-defined Gaussian integrals, we focus our attention on the defect subsector, where one must assign values to the singular integrals of $\widehat{\varphi}$, whose evaluation yields negative moments of the distribution. Indeed, a generic $n$-point correlator of defect fields takes the following form,
\begin{align}
    G^{(n)}(\Delta_1,\ldots,\Delta_n) = \int\mathrm{d}^{p+2}\Theta~\frac{\widehat{P}(\Theta)}{(X_1\bullet\Theta)^{\Delta_1}\cdots(X_n\bullet\Theta)^{\Delta_n}}~.
\end{align}
Due to the singularities encountered at $\Theta\bullet X_i=0$, the integral above is ill-defined. Using the standard Feynman parametrization 
\begin{align}\label{eq:FeynmannParam}
    \frac{1}{(X_1\bullet\Theta)^{\Delta_1}\cdots(X_n\bullet\Theta)^{\Delta_n}} &= \frac{\Gamma(\Delta_1+\cdots+\Delta_n)}{\Gamma(\Delta_1)\cdots\Gamma(\Delta_n)}\int_{[0,1]^n}\mathrm{d}^nu~\frac{\delta(1-\sum_{k=1}^nu_k)\prod_{k=1}^nu^{\Delta_k-1}_k}{(\sum_{k=1}^nu_k( X_k\bullet\Theta))^{\sum_{k=1}^n\Delta_k}}~,
\end{align}
we can shift the divergence to the boundary of the $u$-integration region. In principle, further care must be taken upon treating \eq{FeynmannParam}, as it is only valid when all monomials $\Theta\bullet X_k$ have the same sign. Nevertheless, as we will discuss shortly, we can adopt a regularization scheme which, by suitably removing all divergences, will allow us to forgo this condition.\bigskip

We define the $n$-point correlator to be the following regulated integral, for $\Delta_k>0$,
\begin{align}
    G^{(n)}(\Delta_1,\ldots,\Delta_n) := \frac{\Gamma(\sum_{i=1}^n\Delta_i)}{\prod_{i=1}^n\Gamma(\Delta_i)}\sqint\, \mathrm{d}^nu\int\mathrm{d}^{p+2}\Theta~\widehat{P}(\Theta)\frac{\delta(1-\sum_{k=1}^nu_k)\prod_{k=1}^nu^{\Delta_k-1}_k}{(\sum_{k=1}^nu_k( X_k\bullet\Theta))^{\sum_{k=1}^n\Delta_k}}~,
\end{align}
 where we employ a particular regularization of the $u$-integral, denoted $\sqint\mathrm{d}^nu$, and an analytic continuation of the Gaussian integral $\int\mathrm{d}^{p+2}\Theta~\widehat{P}(\Theta)$. Whenever one of the scaling dimensions vanishes, we expect the $n$-point correlator to reduce to an $(n-1)$-point correlator. We will find this to indeed be the case. We also define the initial value $G^{(0)}=1$.

The analytic continuation of the Gaussian integral can be directly computed as follows. Denote the $u$-linear combination of vectors by $Z$, where $Z=\sum_{k=1}^nu_kX_k$. Without loss of generality, we can perform an $SO(p+2)$ rotation on the network parameters to align the $\Theta$-basis vector with $Z$, leading to a simplification of the $\Theta$-integral down to a one-dimensional Gaussian integral
\begin{align}
    \int\mathrm{d}^{p+2}\Theta~\frac{\widehat{P}(\Theta)}{(\sum_{k=1}^nu_k( X_k\bullet\Theta))^{\sum_{k=1}^n\Delta_k}} &= (Z\bullet Z)^{-\frac{1}{2}\sum_{k=1}^n\Delta_k}\int\mathrm{d}\Theta_Z~\widehat{P}(\Theta_Z)\Theta_Z^{-\sum_{k=1}^n\Delta_k}~.
\end{align}
The principal value of this integral vanishes when $\Delta=\sum_{k=1}^n\Delta_k\in 2\mathbb{Z}+1$; for every other value of $\Delta$, we use the analytic continuation from $\Delta<0$ to write
\begin{align}\label{eq:reciprocal-GaussianAnalyticCont}
    \int\mathrm{d}\Theta_Z~\widehat{P}(\Theta_Z)\widehat{\Theta}_Z^{-\Delta} &= \frac{1}{\sqrt{\pi}}(2\hat\sigma^2)^{-\frac{\Delta}{2}}\Gamma\left(\frac{1-\Delta}{2}\right).
\end{align}
Performing the $\Theta$-integration as described above leaves us with the following $u$-integral for the $n$-point correlator of defect fields $\widehat{\varphi}$,
\begin{align}\label{eq:reciprocal-integralDef}
    G^{(n)}(\Delta_1,\ldots,\Delta_n) &= \frac{1}{\sqrt{\pi}}\Gamma\left(\frac{1-\Delta}{2}\right)\frac{\Gamma(\Delta_1+\cdots+\Delta_n)}{\Gamma(\Delta_1)\cdots\Gamma(\Delta_n)}\times\\
    &\quad\times\sqint\, \mathrm d^nu~\delta\left(1-\sum_{k=1}^nu_k\right)\prod_{k=1}^nu^{\Delta_k-1}_k(2\hat\sigma^2Z(u)\bullet Z(u))^{-\frac{\Delta}{2}},\nonumber
\end{align}
where, as a reminder, we defined
\begin{align}
    Z(u) &= \sum_{k=1}^nu_kX_k & & \text{and}& \Delta=&\sum_{k=1}^n\Delta_k~.
\end{align}

The $u$-integration is constructed as follows. We start by truncating the integration domain for each $u$-variable at each end of the unit interval; with $0<\epsilon<\frac{1}{2}$ describing this cutoff, the domain becomes $n$ copies of the interval $[\epsilon,1-\epsilon]$. Then, consider the asymptotic series as $\epsilon$ tends to zero, $\epsilon\rightarrow 0$. Finally, isolate the coefficient in front of the log-term as the result. Symbolically, this amounts to the following definition
\begin{align}
    \sqint\mathrm d^nu &= \left.\int_{[\epsilon,1-\epsilon]^n}\mathrm d^nu\right|_{\ln(1/\epsilon)},
\end{align}
where $\cdots|_{\ln(\epsilon)}$ instructs one to keep the coefficient in front of the log-term only. This selection rule applied to the hard-cutoff scheme ensures that any functional redefinition of $\epsilon$, $\epsilon\mapsto\epsilon(\nu)$, will preserve the result. It also turns out that any smooth, $u\leftrightarrow 1-u$ symmetric, and sufficiently rapidly vanishing regulating function $\psi_\epsilon(u)$ also preserves the log-term coefficient, rendering our definition regulator-independent.\bigskip

Let us illustrate this cut-off regularization explicitly in the case of two $u$-integrals, corresponding to correlators between two defect insertions. For now, let $I(\Delta_1,\Delta_2)$ denote the $u$-integral part of $G^{(2)}(\Delta_1,\Delta_2)$,
\begin{align}
    I(\Delta_1,\Delta_2) &= \sqint\mathrm d^2 u~\delta\left(1-u_1-u_2\right)u_1^{\Delta_1-1}u_2^{\Delta_2-1}(Z(u)\bullet Z(u))^{-\frac{\Delta_1+\Delta_2}{2}}\nonumber\\
    &= \sqint\mathrm du~u^{\Delta_1-1}(1-u)^{\Delta_2-1}\times\\
    &\quad \times\left(u^2 X_1\bullet X_1 + 2u(1-u)X_1\bullet X_2 + (1-u)^2X_2\bullet X_2\right)^{-\frac{\Delta_1+\Delta_2}{2}}.\nonumber
\end{align}
As these are correlators between defect fields, their insertions are restricted to lie on the defect Poincar\'e section. This condition sets two of the terms to zero, and the $u$-integral becomes a regulated beta-function integral
\begin{align}
    I(\Delta_1,\Delta_2) &= (2X_1\bullet X_2)^{-\frac{\Delta_1+\Delta_2}{2}}\sqint\mathrm du~u^{\frac{\Delta_1-\Delta_2}{2}-1}(1-u)^{\frac{\Delta_2-\Delta_1}{2}-1}.
\end{align}
The resulting asymptotic expansion, as $\epsilon\rightarrow 0$, will generically depend on the values of $\Delta_{1,2}$. However, the log-term will be non-zero iff $\Delta_1=\Delta_2$, which yields the result that will later be used to determine the defect-defect two-point correlator,
\begin{align}
    I(\Delta_1,\Delta_2) &= 2(2X_1\bullet X_2)^{-\Delta_1}\delta_{\Delta_1,\Delta_2}~.
\end{align}

\subsubsection*{One-point functions}
Having properly defined the integration of defect fields $\widehat{\varphi}$ in \eq{reciprocal-integralDef}, we can evaluate the various correlators of ambient and defect fields, $\Phi$ and $\widehat{\varphi}$. The one-point correlation functions of defect fields $\widehat{\varphi}$ directly follow from the definition of $G^{(1)}$. Furthermore, one finds that for all $\Delta>0$, these vanish, as expected for any defect conformal primary (c.f. \eq{defect-one-pt-fn-1}). The expansion of the ambient field in terms of normal and defect ones in \eq{reciprocal-defectOPE} informs us that the ambient field one-point function also vanishes,
\begin{align}
    \mathbb{E}[\widehat\varphi_{\widehat{\Delta}}(X)]&=0 & &\text{and}& \mathbb{E}[\Phi_\Delta(X)]&=0~.
\end{align}
This may appear to be an amusing configuration, in that the coupling to the defect identity vanishes, but it reflects the observation that the expansion never involves the identity defect field.

\subsubsection*{Two-point functions}
Consider now the three classes of two-point correlation functions: $\mathbb{E}[\widehat{\varphi}_{\widehat{\Delta}_1}(X_1)\widehat{\varphi}_{\widehat{\Delta}_2}(X_2)]$, $\mathbb{E}[\Phi_\Delta(X_1)\widehat{\varphi}_{\widehat{\Delta}}(X_2)]$, and $\mathbb{E}[\Phi_{\Delta_1}(X_1)\Phi_{\Delta_2}(X_2)]$. The two-point function between defect fields $\widehat{\varphi}_{\widehat{\Delta}}$ is, by definition, the integral $G^{(2)}$ defined in \eq{reciprocal-integralDef}. A detailed calculation of the $u$-integral was given in the previous subsection. The full correlator for $\Delta\in\mathbb{R}_+$ reads
\begin{align}\label{eq:reciprocal-DD}
    \mathbb{E}[\widehat{\varphi}_{\widehat{\Delta}_1}(X_1)\widehat{\varphi}_{\widehat{\Delta}_2}(X_2)] &= \frac{\sec(\pi\widehat{\Delta}_1)}{\Gamma({\widehat{\Delta}}_1)}(\hat\sigma^2 X_1\bullet X_2)^{-{\widehat{\Delta}}_1}\delta_{{\widehat{\Delta}}_1,{\widehat{\Delta}}_2}~,
\end{align}
where $\sec(\pi {\widehat{\Delta}}) = (-1)^{\widehat{\Delta}}$ for ${\widehat{\Delta}}\in \mathbb{N}$. Note that the prefactor is meromorphic in ${\widehat{\Delta}}_1$ with poles at ${\widehat{\Delta}}_1 = \frac{2n+1}{2}$ for $n\in\mathbb{N}$.

Up to a subtle point that we will address below, \eq{reciprocal-DD} is precisely of the expected form for a conformal defect two-point function.  It also recovers the vanishing of the one-point functions, by taking the limit that either ${\widehat{\Delta}}_{1,2}\to 0$.  The subtlety that emerges is that the defect two-point function does not appear to respect reflection positivity for all values of ${\widehat{\Delta}}$.  Indeed, on the interval $\sec(\pi{\widehat{\Delta}})$ flips sign as it crosses poles at odd half-integral values.  Hence, if the nearest integer to $\Delta$ is odd, $\sec(\pi{\widehat{\Delta}})<0$ and reflection positivity is not obeyed.  We will expand on the generic failure of positivity in the two-point function for certain values of ${\widehat{\Delta}}$ below.

Before addressing the issue concerning reflection positivity, let us evaluate the mixed ambient-defect two-point function, which requires two separate steps. First, one expands the ambient field as in \eq{reciprocal-defectOPE} in terms of defect fields $\widehat{\varphi}$. The mixed correlator becomes an infinite sum of ambient one-point and defect two-point functions. For the former we use \eq{monomial-one-pt-fn} (or equivalently \eq{reciprocal-GaussianAnalyticCont}), while the latter is given in \eq{reciprocal-DD}. The resulting sum collapses down to
\begin{align}\label{eq:reciprocal-AD}
    \mathbb{E}[\Phi_\Delta(X_1)\widehat{\varphi}_{\widehat{\Delta}_2}(X_2)] &= \frac{2^{\frac{\Delta_1-{\widehat{\Delta}_2}}{2}}\sec(\pi{\widehat{\Delta}_2})}{\Gamma(\Delta_1)\Gamma(\frac{{\widehat{\Delta}_2}-\Delta_1}{2}+1)}\tilde\sigma^{{\widehat{\Delta}_2}-\Delta_1}\Vert X_1\Vert_\circ ^{{\widehat{\Delta}_2}-\Delta_1}(\hat\sigma^2 X_1\bullet X_2)^{-{\widehat{\Delta}_2}}~,
\end{align}
when ${\widehat{\Delta}_2}\geq\Delta_1$ and vanishes otherwise. Again, this has the structural form expected for a mixed ambient-defect correlator respecting defect conformal symmetry.  Also, we note that taking the ambient field to be the identity, and therefore $\Delta_1\to0$, recovers the vanishing of the one-point function for defect fields, and trivially also recovers the vanishing of the one-point function for ambient fields for any $\Delta_1>0$ when taking ${\widehat{\Delta}_2}\to 0$.

Finally, the correlator between two ambient fields mirrors the construction of the ambient-defect two-point function. One starts by expanding both ambient fields in terms of normal and defect ones. The correlator becomes an infinite sum of normal and defect field two-point functions. For the former we use lemma~\ref{lemma:2ptGaussianDefect}, while the latter is again given in \eq{reciprocal-DD}. The resulting sum can then be recast into a hypergeometric form,
\begin{align}\label{eq:reciprocal-AA}
    \mathbb{E}[\Phi_{\Delta_1}(X_1)\Phi_{\Delta_2}(X_2)]&= \frac{2^{\frac{\Delta_2-\Delta_1}{2}}\sec(\pi\Delta_1)}{\Gamma(\frac{\Delta_1-\Delta_2}{2}+1)\Gamma(\Delta_2)}\frac{(\hat\sigma^2 X_1\bullet X_2+\tilde\sigma^2 X_1\circ X_2)^{-\Delta_1}}{\tilde\sigma^{\Delta_2-\Delta_1}\Vert X_2\Vert_\circ ^{\Delta_2-\Delta_1}}\times\\\nonumber
    &\qquad\times{}_2F_1\left(\frac{\Delta_1}{2},\frac{\Delta_1+1}{2};\frac{\Delta_1-\Delta_2}{2}+1;\frac{\tilde\sigma^4\Vert X_1\Vert_\circ^2 \Vert X_2\Vert_\circ^2}{\left(\tilde\sigma^2X_1\circ X_2+\hat\sigma^2 X_1\bullet X_2\right)^{2}}\right),
\end{align}
for $\Delta_1+\Delta_2$ even and $\Delta_1\geq \Delta_2$. The $\Delta_1\leq \Delta_2$ case can be found by flipping $\Delta_1\leftrightarrow\Delta_2$ and $X_1\leftrightarrow X_2$. When $\Delta_1+\Delta_2$ is odd, the above expression vanishes.  

Before concluding, we note that we can equivalently express the two point function in terms of conformal defect cross ratios:
\begin{align}\label{eq:reciprocal-ambient-2pt-fn-2}
    \mathbb{E}[\Phi_{\Delta_1}(X_1)\Phi_{\Delta_2}(X_2)] &= \frac{2^{\frac{\Delta_2-\Delta_1}{2}}\sec(\pi\Delta_1)\tilde\sigma^{-\Delta_1-\Delta_2}}{\Gamma(\frac{\Delta_1-\Delta_2}{2}+1)\Gamma(\Delta_2)}\frac{(-\frac{1}{2}\frac{\hat\sigma^2}{\tilde\sigma^2}\chi+\cos(\psi))^{-\Delta_1}}{\Vert X_1\Vert_\circ^{\Delta_1}\Vert X_2\Vert_\circ^{\Delta_2}}\times\\\nonumber
    &\qquad\times{}_2F_1\left(\frac{\Delta_1}{2},\frac{\Delta_1+1}{2};\frac{\Delta_1-\Delta_2}{2}+1;\left(-\frac{1}{2}\frac{\hat\sigma^2}{\tilde\sigma^2}\chi+\cos(\psi)\right)^{-2}\right)~.
\end{align}
Moreover, the functional dependence of $\mathbb{E}[\Phi_{\Delta_1}(X_1)\Phi_{\Delta_2}(X_2)]$ on the cross ratios is essentially the same as in \eq{general-two-pt-fn}, with the replacement $\alpha_i\to-\Delta_i$. Hence, the analysis of the conformal blocks proceeds in exactly the same way as in the monomial case. The caveat here is that, unlike in the monomial class of theories, the defect OPE does not truncate at finite order for fixed $\Delta_i$.  This can be clearly seen from taking $\Delta_1=\Delta_2=1$ whence $\mathbb{E}[\Phi_{\Delta_1}(X_1)\Phi_{\Delta_2}(X_2)] \sim 1/\sqrt{(2\tilde\sigma^2\cos\psi-\hat\sigma^2 \chi)^2-4\tilde\sigma^4}$, and so leads to infinite power series expansions in small $\chi$ or $\cos\psi$.  Nevertheless, we can resum the blocks to find the general solution to the defect Casimir equations \cite{Billo:2016cpy}, i.e. \eq{general-casimir-soln}.

\subsection{Positivity in reciprocal NN-FTs}

In ordinary Euclidean CFTs, reflection positivity is formulated by requiring that two-point functions of a field and its adjoint define a non-negative quadratic form for any finite linear combination of insertions. In our NN–FT setting, we adopt the following minimal analogue: a two-point kernel $G_\Delta(X_1,X_2)$ is called \emph{positive} if, for any finite set of Euclidean points $X_i$ and complex coefficients $c_i$,
\begin{equation}
    \sum_{i,j}\bar c_i c_j\,G_\Delta(X_i,X_j)\;\ge 0\,.
\end{equation}
We will refer to this as `positivity' of the two-point function by abuse of language, without constructing the full Osterwalder–Schrader framework.

In the NN-FT paradigm, let $P(\Theta)$ be a positive probability measure on parameter space, let $\CO_\Delta(X;\Theta)$ be a (complex-valued) functional of $\Theta$ at a point $X$ in the input space. Assuming there is an antilinear involution acting as conjugation on fields, i.e. $\mathcal R:\CO_\Delta(X;\Theta)\mapsto\CO_\Delta(X;\Theta^\dagger)
    = \overline{\CO_\Delta(X;\Theta)}$, we see that the NN-FT two-point function $G_\Delta(X_i,X_j) = \mathbb{E}_{P(\Theta)}\big[\CO_\Delta(X_i;\Theta)\,\overline{\CO_\Delta(X_j;\Theta)}\big]$ satisfies
\begin{equation}\label{eq:reflection-positivity}
    \sum_{i,j}\bar c_i c_j\,G_\Delta(X_i,X_j)
    = \mathbb{E}_{P(\Theta)}\left[\Big|\sum_i c_i\,\CO_\Delta(X_i;\Theta)\Big|^2\right]\geq 0\,,
\end{equation}
which implies that, whenever the expectations are finite, the kernel $G_\Delta$ is positive in the above sense. In the remainder of this subsection, we determine, for the reciprocal theories of interest and for specific choices of priors $P(\Theta)$, the range of $\Delta$ for which this probabilistic representation is valid, and how the two-point functions analytically continue beyond that range once the bare $\Theta$-integrals cease to converge.

\subsubsection*{Real, projective prior}

We first keep $\Theta$ real and relax isotropy. Take a projective, homogeneous prior and define
\begin{equation}
    \mathrm{d}P_{\rm hom}(\Theta)
    = f(\rho)\,w(\check\Theta)\,\mathrm{d}\rho\,\mathrm{d}\check\Theta\,, \qquad p(\Theta) := f(|\Theta|)\,w(\check\Theta)~,
\end{equation}
with $\rho = |\Theta|$, $\check\Theta\in \mathds{S}^{D-1}$, and $f(\rho)\geq0$, and where $w(\check\Theta)\geq0$ is normalized to unity with the assumption that $p$ is smooth and strictly positive in a neighborhood of $\Theta=0$.

Using rotational symmetry, we map $X\to |X|e_1$, so that $\Phi_{\Delta}(X;\Theta) = (|X|\Theta_1)^{-\Delta}$. The diagonal two-point function is
\begin{equation}
    \mathbb{E}_{P_{\rm hom}}\big[|\Phi_\Delta(X;\Theta)|^2\big]
    = |X|^{-2\Delta}\int_{\mathbb R^D}\mathrm{d}^D\Theta\,p(\Theta)\,|\Theta_1|^{-2\Delta}~.
\end{equation}
Near $\Theta_1=0$, the density is bounded and nonzero, so the $\Theta_1$-integral converges in the interval $0\le\Delta<\tfrac{1}{2}$. In this window, $\Phi_\Delta(X)$ is an $L^2$ random variable and $G_\Delta$ is a genuine covariance, so \eqref{eq:reflection-positivity} applies.

To go beyond $0\le\Delta<\tfrac{1}{2}$ we analytically continue the $\Theta$-integral. Fix $X\neq0$ and introduce a thin slab around the surface $\Theta_1=0$,
\begin{equation}
    H(\epsilon) := \big\{\,|\Theta_1|<\epsilon,\ \Theta_\perp\in B\,\big\}~,
\end{equation}
where $\Theta_\perp=(\Theta_2,\dots,\Theta_D)$ and $B$ is a bounded transverse region of volume $V_{D-1}$. The singular contribution to the diagonal two-point function is
\begin{equation}
    \mathcal{I}(\Delta;\epsilon)
    = \int_{H(\epsilon)}\mathrm{d}^D\Theta\,p(\Theta)\,|\Theta_1|^{-2\Delta}\simeq p(0)\,V_{D-1}\,\frac{2\,\epsilon^{1-2\Delta}}{1-2\Delta}+\mathcal O(\epsilon^s)~,
\end{equation}
where in the last equality we expanded in $p(\Theta)=p(0)+\mathcal O(|\Theta|)$ to leading order, and where $s>0$. Clearly, $G_\Delta$ has a simple pole at $\Delta=\tfrac{1}{2}$ with a logarithmic singularity as $\epsilon\to0$. Expanding near the first pole at $\Delta_*=\tfrac{1}{2}$ by writing $\Delta=\Delta_*+\delta$, we find
\begin{equation}
    G_\Delta^{(\epsilon)}(X,X)
    = |X|^{-2\Delta_*}\Big[
        \frac{\tilde{\mathsf A}(X)}{\Delta-\Delta_*}
      + \tilde{\mathsf B}(X)\,\log\frac{1}{\epsilon}
      + \tilde{\mathsf F}_\Delta(X)
      + \mathcal O(\epsilon^s)
    \Big]~,
\end{equation}
with $\tilde{\mathsf F}_\Delta(X)$ holomorphic in $\Delta$ near $\Delta_*$, and $\tilde{\mathsf B}(X)\;\propto\;p(0)\,V_{D-1}$
independently of the detailed implementation of the cutoff. In particular, using the above scheme, the renormalized diagonal two-point function is $G_{\Delta_*}^{\rm ren}(X,X)
    = \tilde{\mathsf B}(X)\,|X|^{-2\Delta_*}$. 

For generic $\Delta$, one defines $G_\Delta^{\rm ren}(X_1,X_2)$ by analytic continuation of the regulated integral. Keeping higher terms in the expansion of $p(\Theta_1,\Theta_\perp)$ along $\Theta_1$,
\begin{equation}
    p(\Theta_1,\Theta_\perp)
    = \sum_{k\ge0} b_k(\Theta_\perp)\,\Theta_1^k
    = \sum_{m\ge0} a_m(\Theta_\perp)\,\Theta_1^{2m}\,,\qquad a_m:=b_{2m}~,
\end{equation}
and inserting this into the local analysis, one finds that each even-$m$ term near $\Delta_m = \frac{1}{2}+m$ contributes a singular piece of the form
\begin{equation}
    \mathcal I_m(\Delta;\epsilon)
    \sim A_m\,\frac{\epsilon^{2m+1-2\Delta}}{2m+1-2\Delta}~,
\end{equation}
with $A_m = \int_B\mathrm{d}^{D-1}\Theta_\perp\,a_m(\Theta_\perp)$. Thus, $G_\Delta$ has an infinite tower of simple poles at $\Delta=\Delta_m$, each accompanied by a universal $\log(1/\epsilon)$ term, whose coefficient is fixed by the corresponding transverse moment $A_m$ of the prior. 

\subsubsection*{Complex isotropic prior}

Keeping real inputs $X\in \mathbb{R}^d$, we now allow complex-valued parameters. Let $\Theta=(\widehat\Theta,\widetilde\Theta)\in\mathbb{C}^D$ split into tangential $\widehat\Theta\in\mathbb{C}^{p+2}$ and normal $\widetilde\Theta\in\mathbb{C}^q$ components. We choose a complex isotropic Gaussian prior
\begin{equation}
    P(\Theta)\propto
    \exp\!\Big(-\frac{\Vert\widehat\Theta\Vert_\bullet^2}{\widehat\sigma^2}\Big)\,
    \exp\!\Big(-\frac{\Vert\widetilde\Theta\Vert_\circ^2}{\widetilde\sigma^2}\Big)~,
\end{equation}
where, for complex arguments, we define $\Vert\widehat\Theta\Vert_\bullet^2 := \widehat\Theta^\dagger\bullet\widehat\Theta$ and $\Vert\widetilde\Theta\Vert_\circ^2 := \widetilde\Theta^\dagger\circ\widetilde\Theta$. We also define the basic complex Gaussian variables
\begin{equation}
    \widehat{\mathcal{Z}}(X) := \widehat\Theta\bullet X\,,\qquad
    {\mathcal Z}(X) := \Theta\cdot X = \widehat\Theta\bullet X + \widetilde\Theta\circ X~,
\end{equation}
with covariance
\begin{equation}
    \Sigma(X_1,X_2)
    := \mathbb{E}\big[\mathcal{Z}(X_1)\,\overline{\mathcal{Z}(X_2)}\big]
    = \widehat{\sigma}^2\,(X_1\bullet X_2) + \widetilde{\sigma}^2\,(X_1\circ X_2)~.
\end{equation}
The ambient reciprocal fields are $\Phi_\Delta(X) =\mathcal Z(X)^{-\Delta}$, while the defect fields are $
    \widehat\varphi_\Delta(X) :=\widehat{\mathcal Z}(X)^{-\Delta}$. Lastly, we define adjoints by
\begin{equation}
    \Phi_\Delta^\dagger(X;\Theta) := \overline{\Phi_\Delta(X;\Theta)}=(\Theta^\dagger\cdot X)^{-\Delta}\,,\qquad
    \widehat\varphi_\Delta^\dagger(X;\Theta) := \overline{\widehat\varphi_\Delta(X;\Theta)}=(\widehat\Theta^\dagger\bullet X)^{-\Delta}~.
\end{equation}

The key input is the negative moment of a complex Gaussian distribution. On the (ambient, resp. defect) Poincar\'e section, $\mathcal{Z}(X)$ (resp. $\widehat{\mathcal Z}(X)$) has a complex Gaussian distribution with variance $\Sigma(X,X)$ (resp. $\widehat\Sigma (X,X)= \Sigma(X,X)|_{\tilde\sigma\to 0}$), so for $0\le\Delta<1$,
\begin{equation}
\big\langle\Phi_\Delta(X_1)\,\Phi_\Delta^\dagger(X_2)\big\rangle
    = \Gamma(1-\Delta)\,\big[\Sigma(X_1,X_2)\big]^{-\Delta}~,
\end{equation}
with $\big\langle\widehat\varphi_\Delta(X)\,\widehat\varphi_\Delta^\dagger(X)\big\rangle$ following similarly. In the entire strip $0\le\Delta<1$, these correlators are expectations over the Gaussian ensemble $P(\Theta)$, so $G_\Delta$ is positive semidefinite. 

For $\Delta\ge1$, the negative moments diverge, and the expression with $\Gamma(1-\Delta)$ must be understood as an analytic continuation. The two-point function becomes a meromorphic function of $\Delta$ with simple poles at positive integers, with the  expansion near $\Delta=j\in\mathbb N$,
\begin{equation}
    \Gamma(1-\Delta)
    \simeq \frac{(-1)^j}{(j-1)!(\Delta-j)} +\mathcal O(1)~.
\end{equation}
 On each interval $(j,j+1)$, the sign of $\Gamma(1-\Delta)$ is fixed and alternates across poles, but in this regime the correlator is no longer a covariance with respect to the fixed Gaussian prior, and hence global positivity is not guaranteed.

The hard-cutoff analysis again makes the tower of poles and logarithmic divergences explicit. The dangerous locus is the complex line $\mathcal Z(X)=\Theta\cdot X=0$, and the regulated diagonal kernel
\begin{equation}
    G_\Delta^{(\epsilon)}(X,X)
    := \int_{\{|\mathcal Z(X)|\ge\epsilon\}}\mathrm d\Theta\,
          P(\Theta)\,|\mathcal Z(X)|^{-2\Delta}
\end{equation}
is controlled near $|\mathcal Z(X)|=0$ by the small-$r$ behavior of the effective radial density $\tilde p(r^2)$ for $\mathcal Z(X)=re^{i\varphi}$.  Expanding $\tilde p(r^2)
    = \tilde p_0 + \tilde p_1 r^2 + \tilde p_2 r^4 + \ldots$, we find that the singular part of $G_\Delta$ behaves schematically as
\begin{equation}
    G_\Delta^{(\epsilon)}(X,X)
    \simeq \sum_{n\ge0}
    \tilde C_n(X)\,\frac{\epsilon^{\,2(n+1-\Delta)}}{2(n+1-\Delta)}
    + \text{(holomorphic in $\Delta$)}~,
\end{equation}
with $\tilde C_n(X)\propto\tilde p_n$. There is an infinite tower of simple poles at $\Delta_n = 1+n\in \mathbb{N}$, and expanding near each $\Delta_n$ as $\Delta=\Delta_n+\delta$ gives
\begin{equation}
    \frac{\epsilon^{\,2(n+1-\Delta)}}{2(n+1-\Delta)}\Big|_{\delta\to 0^+}
    = -\frac{1}{2\delta} - \log\frac{1}{\epsilon} + \mathcal O(\delta)~.
\end{equation}
Thus every pole at $\Delta=1+n$ is accompanied by a universal $\log(1/\epsilon)$ term whose coefficient is fixed by $\tilde p_n$. The leading pole at $n=0$ corresponds to the boundary of the convergence strip at $\Delta_\ast=1$, where the prefactor $\Gamma(1-\Delta)$ develops its first pole.

Taken together, these observations highlight two notions of ``existence'' for reciprocal NN-FT correlators. For $\Delta$ in the convergence window determined by the prior ($0\le\Delta<\tfrac{1}{2}$ for real homogeneous priors, $0\le\Delta<1$ for complex Gaussians), the two-point functions arise as genuine covariances of $L^2$ random fields on parameter space, and satisfy the positivity condition \eqref{eq:reflection-positivity}. Beyond this window, the bare $\Theta$-integrals diverge, but the hard-cutoff analysis shows that they admit a canonical analytic continuation in $\Delta$, organized as a meromorphic family with a tower of simple poles and universal logarithmic divergences at $\Delta=\Delta_\ast+j$ ($\Delta_\ast=\tfrac{1}{2}$ or $1$ depending on the prior). The coefficients of these logarithms are local functionals of the prior near the singular locus and are insensitive to the detailed shape of the regulator. In this renormalized sense, reciprocal NN-FTs provide well-defined two-point functions for arbitrary $\Delta$: the tower of logarithmic coefficients encodes universal, locally determined information about the prior near the singular locus and controls how one interpolates between the strictly probabilistic neural-ensemble regime and the purely meromorphic continuation of the theory.

\section{Conclusion}\label{sec:conclusions}

In the present work, we extended the formalism in \cite{Halverson:2024axc}, which allows for conformal symmetries to be realized in NN-FTs, to include deformations by extended objects --- defects and boundaries. By modifying the embedding space approach used to build conformal fields from NNs to account for the defect subgroup of the global conformal symmetry group, including the normal bundle symmetries, we have expanded the scope of this novel paradigm to define field theories with non-local insertions of arbitrary codimension. These conformal defects alter the constraints that conformal symmetries place on correlation functions, allowing for non-trivial one-point functions for ambient fields to develop.  The non-Gaussianities arising as ambient-to-defect couplings are novel in NN-FTs. Moreover, we have proposed a formalism, based on OPE-like expansions of ambient fields in the presence of a boundary or defect, that decomposes ambient fields into irreducible representations of the defect symmetries. This has enabled a systematic approach to the computation of correlation functions involving ambient fields in terms of defect conformal blocks. The upshot of this OPE-like picture for defects is that it naturally has in its definition a notion of transverse `spin', which is the quantum number for the normal bundle symmetry; this has led to developments in the direction of further generalizing the NN-FTs to include conformal fields carrying non-trivial spacetime spin.

In the sections following the set up of the conformal defect formalism, we have put these new tools to the test with some simple toy example NN-FTs.  In both examples considered, we started from Gaussian-distributed random variables, e.g. by taking the distribution of NN parameters to be $P(\Theta) = \CN(\Theta\mid0,\mathds{1}_d)$. The first example that we studied was the class of `non-unitary' theories $\Phi_{\Delta}(X)= (\Theta\cdot X)^{-\Delta}$, encountered in \cite{Halverson:2024axc}, with scaling dimension $\Delta<0$. Since the basic architecture of the network has positive powers of $\Theta\cdot X$, we referred to these as monomial theories.  In this case, we were able to flesh out details of how to use the new defect formalism and compute one- and two-point functions for various combinations of defect and ambient fields.  Using the OPE-like expansion, we were then able to exactly solve the $SO(p+1,1)$ and $SO(q)_N$ Casimir equations, and thus explicitly find closed-form expressions for the defect conformal blocks in these unusual conformal theories.

In the second set of examples we considered, we used the same architecture, albeit with $\Delta >0$.  This posed several interesting challenges, in that there are obvious pathologies in correlation functions around the origin $\Theta\cdot X =0$ that need to be regulated. In the main case we considered, namely that of the real isotropic Gaussian prior distribution for the network parameters, we developed a regulating procedure by first passing to Feynman parameterization and introducing a cutoff at the endpoint of the intervals where the resulting integrals diverged. We then observed that the coefficient of the log-divergent term of the cut-off integral, which is universal and independent of the shape of the regulator, was precisely of the form of a two-point function of conformal scalar fields.  We then used the defect-defect version of this regulated two-point function together with the OPE-like expansions of ambient reciprocal fields to build ambient-defect and ambient-ambient two-point functions.

Our analysis of the two-point functions in reciprocal theories was not free of surprises.  In developing the regulating procedure, we utilized a series expansion near the defect locus in terms of a tower of defect and normal fields, which naturally mimicked the OPE-like expansions we previously used for monomial theories.  The way that the expansion presented itself, though, was somewhat unconventional, in that the defect identity operator was not present.  This had drastic implications for the one-point function of ambient fields, which --- as the only coupling that can appear in the defect OPE is that to the defect identity --- vanished identically in our regulating scheme. 

Perhaps more troubling was an observed behavior that we subsequently heavily scrutinized, namely that of positivity selection rules in the space of fields, where, for certain values of scaling dimension, the two-point function was strictly negative.  We saw upon further investigation that the selection rules to maintain positivity in two-point functions are a generic feature of reciprocal theories. That is, allowing for real projective (homogeneous) but non-isotropic as well as complex isotropic Gaussian prior distributions, we observed that continuation as a meromorphic function from the domain of absolute convergence of the two-point function through simple poles at the upper bound in the space of $\Delta$ naturally led to a loss of positivity globally.

\subsection*{Future Directions}

\begin{itemize}
    \item {\textbf{Global symmetries and monodromy defects.}} As briefly touched upon in \sn{NN-CFT}, implementing global symmetries in a field-theoretic way in NN-FTs requires a novel approach.  That is, we would naturally draw the parallel from global symmetry in field space implemented by acting on a collection of fields with some group action where the action functional and measure in the path integral, if the symmetry is non-anomalous, remain invariant to a NN-FT where the group acts on an ensemble of networks. In order for the NN-FT to be invariant under the symmetry, we would need to require that the generalized measure in the ensemble $\int D\Theta P(\Theta)$ and the expectations computed with respect to it remain invariant. 

    This opens up an interesting question in the context of defects in NN-FTs.  There are a class of co-dimension $q=2$ defects in theories with global symmetries that are in a sense `trivial' \cite{Lauria:2020emq} in that they can only be defined with reference to an ambient field and are constructed as boundaries for co-dimension $q=1$ topological operators that implement a symmetry transformation, i.e. monodromy defects.  Despite their `trivial' construction they exhibit rich defect physics, see e.g. \cite{Bianchi:2021snj}.  In particular, \cite{Bianchi:2021snj} showed that monodromy defects in free fermion CFTs require certain defect modes in order to be consistent with ambient equations of motion, and given the work in \cite{Frank:2025zuk} constructing fermionic NN-FTs, it would be interesting to consider how such monodromy defects can be constructed in NN-FTs and what novel physics can be inferred from them.
    
    \item {\textbf{Spinning conformal fields.}} As mentioned above, the OPE-like expansion that we use to decompose ambient insertions in terms of the irreps of the defect conformal symmetry and the normal bundle symmetry naturally suggests an extension of the conformal field from NN formalism to include `spinning conformal fields'.  In our case, the transverse spin was simply the quantum number of the global $SO(q)_N$ symmetry.  In general, we should employ a set of auxiliary null coordinates that can be used to absorb free indices at the price of having to encode additional geometric structure in the embedding space formalism; see the brief review of this approach for symmetric traceless tensors (STTs) in \cite[Sec. 2]{Billo:2016cpy}.  For STTs $\Phi^{M_1\ldots M_n}(X)$, the extension should follow easily as the additional constraint on correlations functions of transversality $X_M \Phi^{M\ldots} = 0$ is enforced by introducing $n$ new coordinates $Z$ satisfying $Z^2=Z\cdot X = 0$. We can then use the usual embedding space technology for $\Phi_n = Z_{M_1}\cdots Z_{M_n}\Phi^{M_1\ldots M_n}(X)$, where the indices can be freed afterward by the action of the Todorov operator.  The construction of the formalism to encode non-trivial spacetime spin in conformal fields from NNs, and then extending to spinning defects, is the subject of on-going work by the authors.
    
    \item {\textbf{Anomalies.}} Much like other symmetries, conformal symmetries can be anomalous, and indeed their anomalies are of great interest as they are robust characteristics of a fixed point in the space of QFTs.  However, conformal anomalies are seen when the theory is placed on a curved background.  An interesting line of inquiry thus opens up: If the space from which the input data to the NN has non-trivial curvature, i.e. in the FT picture the background geometry is curved, we can then use usual FT techniques to compute the anomalies, e.g. through correlation functions of the stress tensor. The challenge then is to interpret the meaning of these physical quantities in the properties of NNs.  Including conformal defects in this program would further enrich the story as the defects have their own conformal anomalies that can have wildly different structures as compared to the ambient theory, e.g. \cite{Chalabi:2021jud}.
    
    \item {\textbf{Interacting conformal fields in higher dimensions.}} The structure of the ambient theories that we deform by the inclusion of defects in this paper are (generalized) free field theories.  From dimensional analysis, it is expected that there are no non-trivial (i.e. interacting) fixed points in higher dimensions ($d>6$), or that the structure of such a theory is beyond the capacity to be understood in a Lagrangian formalism.  However, the conformal fields that we have seen being constructed from NNs exist in arbitrary dimension, only constrained by the dimensionality of the input space. Introducing non-Gaussianities in parameter space can be understood as turning on interactions for the conformal fields.  Thus, it seems as though the NN-FT has the potential capacity to build examples of higher-dimensional interacting conformal fields.  It would be tremendously interesting to explore the ways in which the NN-FT paradigm can evade these constraints.

    \item \textbf{Two-dimensional theories.}  Given the flexibility of conformal fields in NN-FTs in that they can be defined in arbitrary dimensions, it is natural to ask how to construct novel two-dimensional conformal fields and whether they can realize Virasoro symmetry.  We will address this problem in upcoming work \cite{Robinson:2025ybg}; with an eye toward extending our defect construction to explore boundary physics in two-dimensions, e.g. Ishibashi states, from the perspective of NNs.

\end{itemize}

\section*{Acknowledgments}
We would like to thank Jim Halverson for contributions and discussion in the early phase of this project.  We would also like to thank Nathan Benjamin, Christian Ferko, Silviu Pufu, and Justinas Rumbutis for illuminating discussions during this work.  The work of P.C. is supported by a Mayflower studentship from the University of Southampton.  The work of B.R. is supported by an NWO vidi grant (number 016.Vidi.189.182). The work of B.S. is partly supported by the National Research Foundation of Korea (NRF) grant RS-2025-00516583 and by the STFC consolidated grant ST/T000775/1.

\appendix
\section{Useful identities}\label{app:two-point-fns}

\begin{lemma}\label{lemma:1ptGaussianDefect}
    Let $\mathcal{N}(0,\hat\sigma^2\mathds{1}_{d+2})$ be a multivariate Gaussian distribution centered at the origin with variance matrix $\hat\Sigma_{AB}=\hat\sigma^2\delta_{AB}$. The expectation value of the defect variables $\hat\theta$ then obeys
    \begin{align}
        \mathbb{E}[(X\bullet\theta)^{2\lambda}] &= Q(2\lambda)(X\bullet X)^{\lambda}\hat\sigma^{2\lambda} \overset{\mathrm{P.S.}}{=}0~,
    \end{align}
    for all $\lambda\in\mathbb{N}^*$. The function $Q(n)$ is defined as counting all possible pairings between $n$ (even) objects
    \begin{align}
    Q(n) &= 2^{n/2}\pi^{-1/2}\Gamma\left(\frac{n+1}{2}\right)~.
    \end{align}
\end{lemma}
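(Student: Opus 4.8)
The plan is to recognize $X\bullet\theta$ as a single centered Gaussian random variable and reduce the claim to the classical even-moment formula for a one-dimensional Gaussian, together with a Gamma-function identity that rewrites $Q(2\lambda)$ as a double factorial. First I would note that $X\bullet\theta$ is a linear form in the Gaussian-distributed parameters $\theta\sim\mathcal N(0,\hat\sigma^2\mathds 1)$, and is therefore itself a centered Gaussian variable. Its variance is fixed by the covariance of the prior paired with the invariant product, namely $\mathbb E[(X\bullet\theta)^2]=\hat\sigma^2\,(X\bullet X)$, which is precisely the second-moment identity already underlying the moment generating function in \eq{gaussian-mgf}. This reduces the problem to computing the $2\lambda$-th moment of a centered scalar Gaussian with variance $s^2:=\hat\sigma^2(X\bullet X)$.

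Next I would invoke Wick's (Isserlis') theorem: the $2\lambda$-th moment of a centered Gaussian equals the number of perfect pairings of $2\lambda$ objects times $s^{2\lambda}$, so that $\mathbb E[(X\bullet\theta)^{2\lambda}]=(2\lambda-1)!!\,\hat\sigma^{2\lambda}(X\bullet X)^{\lambda}$, while the odd moments vanish (consistent with the statement addressing only even powers). The combinatorial factor $(2\lambda-1)!!$ is exactly ``all possible pairings of $2\lambda$ objects,'' matching the stated interpretation of $Q$. To complete the first equality I would confirm the closed form via the half-integer Gamma identity $\Gamma(\lambda+\tfrac12)=\frac{(2\lambda)!}{4^{\lambda}\lambda!}\sqrt{\pi}$, which gives $Q(2\lambda)=2^{\lambda}\pi^{-1/2}\Gamma(\lambda+\tfrac12)=\frac{(2\lambda)!}{2^{\lambda}\lambda!}=(2\lambda-1)!!$, in agreement with the Wick count. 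The Poincar\'e-section statement then follows immediately: on the defect section the normal coordinates vanish and $X$ is null, so $X\bullet X\overset{\mathrm{P.S.}}{=}0$, whence for every $\lambda\in\mathbb N^*$ the expectation vanishes.

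The only genuinely delicate step is establishing the second-moment identity $\mathbb E[(X\bullet\theta)^2]=\hat\sigma^2(X\bullet X)$, since the $\bullet$ product carries Lorentzian signature $(p+1,1)$ whereas the prior is isotropic; one must be consistent about index placement and about the convention tying the isotropic covariance to the invariant product. In particular, the indefinite ``variance'' means the relevant one-dimensional integral is a formal object defined by analytic continuation rather than a literal positive-definite Gaussian, so I would either fix the covariance to reproduce $X\bullet X$ by construction (as in \eq{gaussian-mgf}) or treat the resulting integral through the same continuation/regularization logic used elsewhere in the text. Once this covariance normalization is pinned down, every remaining step is routine combinatorics and a standard special-function identity.
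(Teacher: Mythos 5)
Your proposal is correct, but it takes a genuinely different route from the paper. The paper proves the lemma by induction on $\lambda$: it verifies the base case $\mathbb{E}[(X\bullet\theta)^2]=\Vert X\Vert_\bullet^2\,\hat\sigma^2$ directly, then uses Stein's lemma, $\mathbb{E}[g(\theta)\theta_A]=\hat\sigma^2\sum_B\delta_{AB}\mathbb{E}[\partial_B g(\theta)]$, to derive the multivariate contraction identity $\mathbb{E}[\theta_{A_1}\cdots\theta_{A_{2\lambda}}\theta_{B_1}\theta_{B_2}]$ and obtain the recursion $\mathbb{E}[(X\bullet\theta)^{2\lambda+2}]=(2\lambda+1)\Vert X\Vert_\bullet^2\hat\sigma^2\,\mathbb{E}[(X\bullet\theta)^{2\lambda}]$, which resums to $\frac{(2\lambda+1)!}{2^\lambda\lambda!}\Vert X\Vert_\bullet^{2\lambda+2}\hat\sigma^{2(\lambda+1)}$. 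You instead observe that $X\bullet\theta$ is a single centered scalar Gaussian with variance $\hat\sigma^2(X\bullet X)$, invoke the classical even-moment formula $(2\lambda-1)!!\,s^{2\lambda}$, and check $Q(2\lambda)=2^\lambda\pi^{-1/2}\Gamma(\lambda+\tfrac12)=(2\lambda-1)!!$ via the half-integer Gamma identity — all of which is correct and shorter. What the paper's heavier inductive machinery buys is reuse: the intermediate Stein identity with free indices is exactly what powers the two-vector pairing count in Lemma~\ref{lemma:2ptGaussianDefect}, where your one-dimensional reduction would have to be upgraded to a bivariate Isserlis argument. Your caveat about the indefinite signature of $\bullet$ versus the isotropic prior is well taken and, if anything, more careful than the paper, which applies Stein's lemma with Euclidean covariance $\hat\Sigma_{AB}=\hat\sigma^2\delta_{AB}$ and then quotes the result in terms of $\Vert X\Vert_\bullet$ without comment; your resolution — fixing the covariance convention so that $\mathbb{E}[(X\bullet\theta)(Y\bullet\theta)]=\hat\sigma^2\,X\bullet Y$, consistent with the MGF in \eq{gaussian-mgf} — is precisely the identification the paper makes implicitly. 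The Poincar\'e-section vanishing ($X\circ X=0$ together with the null condition forcing $X\bullet X=0$) matches the paper's reasoning.
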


\begin{proof}
    Let us prove this statement recursively.
    
    When $\lambda=1$, we explicitly find $\mathbb{E}[(X\bullet\theta)^2]=\Vert X\Vert_\bullet^2 \,\hat\sigma^2$. Let us now assume that $\mathbb{E}[(X\bullet\theta)^{2\lambda}] = \frac{(2\lambda-1)!}{2^{\lambda-1}(\lambda-1)!}\Vert X\Vert_\bullet ^{2\lambda}\hat\sigma^{2\lambda}$ for some $\lambda\in\mathbb{N}^*$. Then, using Stein's lemma (with $\hat\mu_1=0$ and $\hat\Sigma_{AB}=\hat\sigma^2\delta_{AB}$)~,
    \begin{align}
        \mathbb{E}[g(\theta)\theta_A] &= \hat\sigma^2\sum_{B}\delta_{AB}\mathbb{E}[\partial_B g(\theta)]~,
    \end{align}
    we find the intermediate result
    \begin{align}
        \mathbb{E}[\theta_{A_1}\cdots\theta_{A_{2\lambda}}\theta_{B_1}\theta_{B_2}] &= \hat\sigma^2\sum_{i=1}^{2\lambda}\delta_{B_2 A_i}\mathbb{E}[\theta_{A_1}\cdots\widehat{\theta_{A_i}}\cdots\theta_{A_{2\lambda}}\theta_{B_1}]+\hat\sigma^2\delta_{B_1B_2}\mathbb{E}[\theta_{A_1}\cdots\theta_{A_{2\lambda}}]~.
    \end{align}
    Contracting this with the vector components of $X$, we are able to evaluate the expression at $\lambda+1$,
    \begin{align}
        \mathbb{E}[(X\bullet\theta)^{2\lambda+2}] &= X^{A_1}\cdots X^{A_{2\lambda}}X^{B_1}X^{B_2}\mathbb{E}[\theta_{A_1}\cdots\theta_{A_{2\lambda}}\theta_{B_1}\theta_{B_2}]\nonumber\\
        &=(2\lambda+1)\Vert X\Vert_\bullet^2\hat\sigma^2\mathbb{E}[\Vert X\Vert_\bullet ^{4\lambda}]\nonumber\\
        &= \frac{(2\lambda+1)!}{2^{\lambda}\lambda!}\Vert X\Vert_\bullet ^{2\lambda+2}\hat\sigma^{2(\lambda+1)}~.
    \end{align}
\end{proof}

\begin{lemma}\label{lemma:2ptGaussianDefect}
    Let $\mathcal{N}(0,\hat\sigma^2\mathds{1}_{d+2})$ be a multivariate Gaussian distribution centered at the origin with variance matrix $\hat\Sigma_{AB}=\hat\sigma^2\delta_{AB}$. Let $X_1$, $X_2$ be two distinct $(d+2)$-dimensional vectors. Then the expectation value of the random variables $\hat\theta$ obeys
    \begin{align}\label{eq:lemma-2}
        \mathbb{E}[(X_1\bullet\theta)^{n_1}(X_2\bullet\theta)^{n_2}] = \sum_{\substack{
        n=0\\
        n_1-n=0\,[2]\\
        n_2-n=0\,[2]}}^{\operatorname{min}(n_1,n_2)}&(X_1\bullet X_2)^n\Vert X_1
        \Vert_\bullet^{n_1-n} \Vert X_2\Vert_\bullet^{n_2-n}\hat\sigma^{n_1+n_2}\times\\\nonumber &\hspace{0.25cm}\times\frac{2^{\frac{2n-n_1-n_2}{2}}\Gamma(n_1+1)\Gamma(n_2+1)}{\Gamma(n+1)\Gamma(\frac{n_1-n+2}{2})\Gamma(\frac{n_2-n+2}{2})}
    \end{align}
    for all $n_1,n_2\in\mathbb{N}$. A similar expression holds for the tangent space random variables $X_1\circ\theta$, where $\bullet$ should be replaced by $\circ$ and $\hat{\sigma}^2$ by $\tilde{\sigma}^2$.

    For defect variables, all terms proportional to $X_1\bullet X_1$ and $X_2\bullet X_2$ vanish on the defect Poincar\'e section, and so the expectation value simplifies to
    \begin{align}
        \mathbb{E}[(X_1\bullet\theta)^{n_1}(X_2\bullet\theta)^{n_2}] &\overset{\mathrm{P.S.}}{=} \delta_{n_1 n_2}\Gamma(n_1+1)(X_1\bullet X_2)^{n_1}\hat\sigma^{2n_1}~.
    \end{align}
\end{lemma}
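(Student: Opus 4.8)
The plan is to treat $A := X_1\bullet\theta$ and $B := X_2\bullet\theta$ as a pair of jointly centered Gaussian random variables and to read off the desired mixed moment from their bivariate moment generating function, exactly in the spirit of the MGF computation the authors perform in \eq{gaussian-mgf}. First I would record the three second moments, which follow immediately from $\mathbb{E}[\theta_A\theta_B]=\hat\sigma^2\delta_{AB}$:
\begin{align*}
\mathbb{E}[A^2]=\hat\sigma^2\Vert X_1\Vert_\bullet^2,\qquad
\mathbb{E}[B^2]=\hat\sigma^2\Vert X_2\Vert_\bullet^2,\qquad
\mathbb{E}[AB]=\hat\sigma^2(X_1\bullet X_2).
\end{align*}
Since $(A,B)$ is centered Gaussian, its MGF is $M(s,t)=\exp\!\big(\tfrac12 s^2\,\mathbb{E}[A^2]+st\,\mathbb{E}[AB]+\tfrac12 t^2\,\mathbb{E}[B^2]\big)$, and the target is recovered as $\mathbb{E}[A^{n_1}B^{n_2}]=\partial_s^{n_1}\partial_t^{n_2}M(s,t)\big|_{s=t=0}$.

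Next I would expand the exponential as a triple series in its three monomials and extract the coefficient of $s^{n_1}t^{n_2}$. Writing the cross-contraction count as $n$, the factor $(st\,\mathbb{E}[AB])^n$ must be accompanied by $(\tfrac12 s^2\mathbb{E}[A^2])^{(n_1-n)/2}$ and $(\tfrac12 t^2\mathbb{E}[B^2])^{(n_2-n)/2}$, which forces the parity conditions $n_1-n,\,n_2-n\in 2\mathbb{N}$ and the range $0\le n\le\min(n_1,n_2)$ quoted in \eq{lemma-2}. Collecting the multinomial coefficient $n_1!\,n_2!/\big[n!\,((n_1-n)/2)!\,((n_2-n)/2)!\big]$, substituting the three second moments, and rewriting the integer factorials as Gamma values via $((n_i-n)/2)!=\Gamma(\tfrac{n_i-n+2}{2})$, reproduces precisely the summand in \eq{lemma-2}; the powers of $2$ and of $\hat\sigma$ bookkeep automatically to $2^{(2n-n_1-n_2)/2}$ and $\hat\sigma^{n_1+n_2}$. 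The identical computation with $\bullet\to\circ$ and $\hat\sigma\to\tilde\sigma$ gives the tangent-space statement.

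The only genuine obstacle is the bookkeeping that the coefficient of $s^{n_1}t^{n_2}$ collapses exactly to the stated Gamma-function prefactor while respecting the two parity selection rules; this is a routine but slightly delicate matching of double factorials to $\Gamma$-values, equivalent to specializing Isserlis' (Wick's) theorem to this two-point configuration. One could equally proceed by the recursive Stein's-lemma argument used in Lemma~\ref{lemma:1ptGaussianDefect}, promoting the single contraction there to the two admissible pairings $AA/BB$ and $AB$. Finally, for the Poincaré-section form I would impose $\Vert X_1\Vert_\bullet=\Vert X_2\Vert_\bullet=0$: every summand with $n<n_1$ or $n<n_2$ carries a positive power of one of these vanishing norms, so only the $n=n_1=n_2$ term survives, which requires $n_1=n_2$ (hence the $\delta_{n_1 n_2}$) and reduces the prefactor to $\Gamma(n_1+1)$, giving $\mathbb{E}[(X_1\bullet\theta)^{n_1}(X_2\bullet\theta)^{n_2}]\overset{\mathrm{P.S.}}{=}\delta_{n_1 n_2}\Gamma(n_1+1)(X_1\bullet X_2)^{n_1}\hat\sigma^{2n_1}$.
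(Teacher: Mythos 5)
Your proof is correct, but it takes a genuinely different route from the paper's. The paper proves the lemma purely combinatorially, as a direct Wick/Isserlis pairing count: it classifies the contractions by the number $n$ of cross pairings producing $X_1\bullet X_2$, counts $\tfrac{1}{n!}P(n_1,n)P(n_2,n)$ ways to form them with $P(m,n)=\tfrac{m!}{(m-n)!}$, and supplies the residual self-contraction factor $Q(n_1-n)Q(n_2-n)$ from Lemma~\ref{lemma:1ptGaussianDefect}, which is itself established by a Stein's-lemma recursion. You instead package $A=X_1\bullet\theta$ and $B=X_2\bullet\theta$ as a centered bivariate Gaussian and extract the mixed moment from the generating function: expanding $\exp\!\big(\tfrac12 s^2 a + st\,c + \tfrac12 t^2 b\big)$ as a triple series, the constraint $2j+n=n_1$, $2k+n=n_2$ forces exactly the stated parity selection rules and range $0\le n\le \min(n_1,n_2)$, and the resulting coefficient $n_1!\,n_2!\,\big/\big[n!\,\big(\tfrac{n_1-n}{2}\big)!\,\big(\tfrac{n_2-n}{2}\big)!\,2^{(n_1+n_2-2n)/2}\big]$ coincides with the paper's pairing count, so the Gamma-function prefactor matches identically. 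Your route is self-contained --- it needs neither Lemma~\ref{lemma:1ptGaussianDefect} nor the recursion behind it, since the powers of $2$ and $\hat\sigma$ and the parity rules emerge automatically from the series rather than from double-factorial bookkeeping --- and it dovetails with the MGF computation the paper itself performs as an aside in \sn{non-unitary} (c.f.~\eq{gaussian-mgf}) to bypass the binomial expansion; the paper's combinatorial route, by contrast, makes the pairing interpretation explicit, which is the language the appendix leans on. Your Poincar\'e-section reduction is also exactly the paper's: every summand with $n<n_1$ or $n<n_2$ carries a positive power of $\Vert X_1\Vert_\bullet$ or $\Vert X_2\Vert_\bullet$, both of which vanish for defect insertions, so only the $n=n_1=n_2$ term survives, yielding $\delta_{n_1 n_2}\Gamma(n_1+1)(X_1\bullet X_2)^{n_1}\hat\sigma^{2n_1}$.
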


\begin{proof}
    The expectation value is given by all possible contractions between $n_1$ copies of the vector $X_1$ and $n_2$ copies of the vector $X_2$. We can classify them according to the number of contractions $X_1\bullet X_2$, call it $n$.

    For a given $n$, there are $\frac{1}{n!}P(n_1,n)P(n_2,n)$ ways of constructing these $n$ contractions of $X_1$ and  $X_2$. $P(m,n)$ denotes the number of permutations of $n$ objects within a set of $m$ objects, and is defined as $P(m,n)=\frac{m!}{(m-n)!}$. The remaining $n_1-n$ copies of $X_1$ must then be contracted within themselves. Similarly for the $n_2-n$ copies of $X_2$. Of course, this is only possible if both $n_1-n$ and $n_2-n$ are even. Whenever that is the case, there are $Q(n_1-n)Q(n_2-n)$ ways of contracting these vectors among themselves, as shown in lemma~\ref{lemma:1ptGaussianDefect}.
\end{proof}

Note that \eq{lemma-2} can be resummed in terms of a hypergeometric function.
    \begin{subequations}
    For $n_1$, $n_2$ both even, this gives
    \begin{align}
        \mathbb{E}[(X_1\bullet\theta)^{n_1}(X_2\bullet\theta)^{n_2}] &= \Vert X_1\Vert_\bullet^{n_1}\Vert X_2\Vert_\bullet^{n_2}\left(\frac{\hat\sigma^2}{2}\right)^{\frac{n_1+n_2}{2}}\frac{\Gamma(n_1+1)\Gamma(n_2+1)}{\Gamma\left(\frac{n_1}{2}+1\right)\Gamma\left(\frac{n_2}{2}+1\right)}\times\nonumber\\
        &\qquad\times{}_2F_1\left(-\frac{n_1}{2},-\frac{n_2}{2};\frac{1}{2};\frac{(X_1\bullet X_2)^2}{\Vert X_1\Vert_\bullet^2 \Vert X_2\Vert_\bullet^2}\right)~,
    \end{align}
    whilst for $n_1$, $n_2$ both odd,
    \begin{align}
        \mathbb{E}[(X_1\bullet\theta)^{n_1}(X_2\bullet\theta)^{n_2}] &= \frac{X_1\bullet X_2}{\Vert X_1\Vert_\bullet^{1-n_1}\Vert X_2\Vert_\bullet^{1-n_2}}\hat\sigma^{n_1+n_2}\frac{2^{\frac{2-n_1-n_2}{2}}\Gamma(n_1+1)\Gamma(n_2+1)}{\Gamma\left(\frac{n_1-1}{2}+1\right)\Gamma\left(\frac{n_2-1}{2}+1\right)}\times\\\nonumber
        &\qquad\times{}_2F_1\left(\frac{1-n_1}{2},\frac{1-n_2}{2};\frac{3}{2};\frac{(X_1\bullet X_2)^2}{\Vert X_1\Vert_\bullet^2 \Vert X_2\Vert_\bullet^2}\right)~,
    \end{align}
    \end{subequations}
    and zero otherwise.
\bibliographystyle{JHEP}
\bibliography{DefectNNs}
\end{document}